\newcommand{\chan}[1]{\mathcal #1}
\newcommand{\cchan}[1]{\widehat {\mathcal #1}}
\newcommand{\gcchan}[1]{\widetilde {\mathcal #1}}
\newcommand{\comp}{}
\newcommand{\bra}[1]{\langle #1 |}
\newcommand{\ket}[1]{| #1 \rangle}
\newcommand{\ketbra}[2]{\ket{#1}\bra{#2}}
\newcommand{\proj}[1]{\ket{#1}\bra{#1}}
\newcommand{\tr}{{\rm Tr}}
\newcommand{\one}{{\bf 1}}
\newcommand{\re}{{\rm Re}\,}
\newcommand{\lnt}{\succeq}   
\newcommand{\good}[1]{#1_g}
\newcommand{\Hil}{\mathcal H}
\newcommand{\chanh}[1]{\mathcal #1^\dagger}
\newcommand{\cchanh}[1]{{\cchan #1}^{{\dagger}}}
\newcommand{\id}{{\rm id}}
\theoremstyle{plain}
\newtheorem{proposition}{Proposition}
\newtheorem{theorem}[proposition]{Theorem}
\newtheorem{lemma}[proposition]{Lemma}
\newtheorem{corollary}[proposition]{Corollary}
\theoremstyle{definition}
\newtheorem{definition}{Definition}
\begin{document}

\title{Approximate simulation of quantum channels}
\author{C\'edric B\'eny}
\affiliation{Centre for Quantum Technologies, National University of Singapore, 3 Science Drive 2, Singapore 117543}
\affiliation{Institut f\"ur Theoretische Physik, Leibniz Universit\"at Hannover Appelstra\ss e 2, 30167 Hannover, Germany}
\author{Ognyan Oreshkov}
\affiliation{QuIC, Ecole Polytechnique, CP 165, Universit\'{e} Libre de Bruxelles, 1050 Brussels, Belgium}
\date{\today}

\begin{abstract}
In Ref.~\cite{beny10}, we proved a duality between two optimizations problems. The primary one is, given two quantum channels $\chan M$ and $\chan N$, to find a quantum channel $\chan R$ such that $\chan R \circ \chan N$ is optimally close to $\chan M$ as measured by the worst-case entanglement fidelity. The dual problem involves the information obtained by the environment through the so-called complementary channels $\cchan M$ and $\cchan N$, and consists in finding a quantum channel $\chan R'$ such that $\chan R' \circ \cchan M$ is optimally close to $\cchan N$. It turns out to be easier to find an approximate solution to the dual problem in certain important situations, notably when $\chan M$ is the identity channel---the problem of quantum error correction---yielding a good near-optimal worst-case entanglement fidelity as well as the corresponding near-optimal correcting channel. Here we provide more detailed proofs of these results. In addition, we generalize the main theorem to the case where there are certain constraints on the implementation of $\chan R$, namely on the number of Kraus operators. We also offer a simple algebraic form for the near-optimal correction channel in the case $\chan M = \id$. For approximate error correction, we show that any $\varepsilon$-correctable channel is, up to appending an ancilla, $\varepsilon$-close to an exactly correctable one. We also demonstrate an application of our theorem to the problem of minimax state discrimination.
\end{abstract}

\maketitle



Shannon theory and error correction, be their classical or quantum version, are based on the problem of transmitting information through a given noisy channel $\chan N$ by choosing an encoding channel $\chan E$ and a decoding channel $\chan R$ such that when composed, they simulate a noiseless channel $\id$: $\chan R \chan N \chan E \approx \id$. For instance, the capacity of a channel $\chan N$ is the largest ratio $n/m$ such that $\chan N^{\otimes m}$ can perfectly simulate $\id^{\otimes n}$ in the limit $n \rightarrow \infty$~\cite{kretschmann04}.





Classically, the problem of simulating a noiseless channel has been found to provide the benchmark for most of a channel's information carrying capabilities.
In quantum information theory, however, the situation is more complex. For instance, the capacity of a quantum channel defined in this way---the quantum capacity---does not suffice to determine the capacity of channels used in conjunction with each other~\cite{smith08}.

More generally, one can consider the simulation of an arbitrary channel $\chan M$: $\chan R \chan N \chan E \approx \chan M$. For example, if $\chan N$ is quantum, choosing the target $\chan M = \id^{\otimes n}$ yields the quantum capacity, while using copies of a fully decoherent (i.e., classical) channel
\begin{equation}
\chan M(\rho) = \sum_i \bra i \rho \ket i \proj i
\end{equation}
yields the classical capacity.
A slightly more general case is that where $\chan M$ is a noiseless channel on any C$^*$-algebra, which yields subsystem quantum error correction (QEC) and hybrid quantum-classical error correction~\cite{beny07x1}, or hybrid capacities in the asymptotic case~\cite{devetak05}.


A fundamental result in the case of standard QEC, namely, simulation of the identity quantum channel, is the Knill-Laflamme conditions~\cite{knill97}, which, given an encoding $\chan E$, provide a criterion for the existence of a corresponding decoding channel $\chan R$. Specifically, it says that $\chan R$ exists, i.e., the channel and code are correctable, if and only if the environment gains no information about the encoded state. This condition, in its approximate form~\cite{kretschmann08}
is also the main starting point for the ``decoupling'' approach to channel capacities~\cite{hayden05}, namely, the corresponding result applied to states via the Choi-Jamio{\l}kowski isomorphism.

Here we detail and extend work presented in Ref.~\cite{beny10} which generalizes these results to a generic target $\chan M$. In Section~\ref{main-thm-section}, we prove two slightly more general versions of the main theorem of Ref.~\cite{beny10}, which says that
the optimal distance (optimized over the decoding operations $\chan R$) between $\chan R \chan N$ and $\chan M$ is equal to the optimal distance between $\cchan N$ and $\chan R\cchan M$, where the hat denotes respective {\em complementary channels}.
This yields an efficient way of estimating the optimal distance for a large class of target channels $\chan M$ (Section~\ref{special-case-bounds}). We also show how to explicitly construct a channel $\chan R$ achieving the estimated distance (Section~\ref{special-case-channel}).
In Section~\ref{nature}, we show that any $\varepsilon$-correctable channel is, up to appending an ancilla, $\varepsilon$-close to an exactly correctable one. We also give an application of our main theorem to the problem of minimax state discrimination in Section~\ref{discrimination}.


We note that our approach also yields new results in the important special case $\chan M = \id$. In order to measure the quality of a simulation, we use a fidelity-based distance as in Refs.~\cite{barnum00,tyson09x1}. In contrast to these works, however, our approach yields an approximate reversal channel for the {\em worst-case} entanglement fidelity which is a state-independent measure. See Section~\ref{comparison} for a comparison with these works in the case $\chan M = \id$.

Other works focused on the worst-case trace distance~\cite{kretschmann08, beny09x1}. The advantage of the fidelity over these is that the optimal fidelity in the dual problem is precisely equal to the optimal fidelity in the original problem. Although the dual problem may not be solved easily, it can be precisely estimated. This yields better bounds on the original optimization which are
useful not only for analyzing asymptotic scenarios,
but may be advantageous also in one-shot scenarios, e.g., to estimate the error in an approximate quantum error correction scheme~\cite{leung97,schumacher01,crepeau05,buscemi08,ng09,beny11}.


In addition, since our results concern only the problem of finding $\chan R$ such that $\chan R \chan N' \approx M$, they are not restricted to the case where $\chan N' = \chan N \chan E$ but apply also to the case where resources are shared between the sender and the recipient. For instance, for an entanglement-assisted scheme \cite{bennett99, brun06} one would consider a channel of the form
\begin{equation}
\chan N'(\rho) = \frac 1 d \sum_i (\chan N \chan E) (\rho \otimes \ketbra i j) \otimes \ketbra i j.
\end{equation}




\section{Preliminaries}

In this paper, we restrict our considerations to finite-dimensional systems. We begin by introducing the main concepts that we will be using to express our results.

\subsection{Complementary channels}

A channel ${\chan N}$ is a completely positive trace-preserving map. It can always be written as
\begin{equation}
{\chan N}(\rho) = \sum_i E_i \rho E_i^\dagger,
\end{equation}
where the Kraus operators $E_i$ satisfy $\sum_i E_i^\dagger E_i = \one$. Conversely, any function of this form is completely positive and trace-preserving.
The dual ${\chan N}^\dagger$ is defined by the relation
\begin{equation}
\tr({\chan N}(\rho) A ) = \tr(\rho \, {\chan N}^\dagger (A))
\end{equation}
for any state $\rho$ and any operator $A$. This implies that
\begin{equation}
{\chanh N}(A) = \sum_i E_i^\dagger \rho E_i.
\end{equation}
Physically, ${\chan N}$ is interpreted as evolving states, while ${\chan N}^\dagger$ evolves observables. Hence, ${\chan N}^\dagger$ represents the Heisenberg picture for the evolution defined by the channel. To avoid confusion, we only call ${\chan N}$ a {\em channel}, while ${\chan N}^\dagger$ is its {\em dual}.

Note that the dimension of the input and output quantum systems may differ. In this case the operators $E_i$ are not square matrices. For instance, there is only one channel whose output Hilbert space has dimension one: the trace. Channels with a one-dimensional input are in one-to-one correspondence with quantum states. Since the input is just a complex number $z$, the output can only be of the form $z \rho$ for some fixed quantum state $\rho$.

We let the reader check that the dual of the trace channel takes as an input a complex number $\alpha$ (an operator on the one-dimensional Hilbert space), and outputs the identity operator times $\alpha$:
\begin{equation}
\tr^\dagger(\alpha) = \alpha \one.
\end{equation}
Note that the partial trace can be written as $\id \otimes \tr$, where $\id$ is the identity channel on the subsystem that is not traced over.
Hence, for instance,
\begin{equation}
(\id \otimes \tr)^\dagger(A) = A \otimes \one.
\end{equation}

Stinespring's dilation theorem~\cite{stinespring55} guarantees that for any channel ${\chan N}$ we can find a (nonunique) {\em isometry} $V$ which maps the input space of $\chan N$ into its output space extended by an extra system $E$, the ``environment'', such that
\begin{equation}
\label{dilation}
{\chanh N}(A) = V^\dagger (A \otimes \one_E) V,
\end{equation}
or equivalently,
\begin{equation}
{\chan N}(\rho) = (\id \otimes \tr_E) (V \rho V^\dagger),
\end{equation}
where $\id \otimes \tr_E$ denotes the partial trace over $E$.
We say that $V$ defines a \textit{dilation} of ${\chan N}$.

An isometry $V$ is any operator satisfying the property $V^\dagger V = \one$. It describes how the input Hilbert space is isometrically embedded in the target space. It can be seen as a unitary operator but with the input restricted to a subspace. For instance, if the target space dimension is divisible by the input space dimension, it amounts to adding an auxiliary system with a fixed pure initial state $\ket{\phi_0}$ and letting it interact unitarily with the system, i.e.,
\begin{equation}
V \ket{\psi} := U (\ket{\psi} \otimes \ket{\phi_0})
\end{equation}
for some unitary operator $U$.

It is easy to see that this isometry $V$ is not unique. Indeed we can always use another isometry $W$ from $E$ to a larger environment $E'$ to obtain also
\begin{equation}
{\chan N}(\rho) = (\id \otimes \tr_{E'}) (V' \rho (V')^\dagger),
\end{equation}
where $V'$ is the new isometry
\begin{equation}
V' = (\one \otimes W) V.
\end{equation}

\begin{definition}
Let $V$ define a dilation of the channel $\chan N$ as above. We say that the channel $\cchan N$, defined by
\begin{equation}
\cchanh N(B) =  V^\dagger (\one \otimes B) V
\end{equation}
for all $B$, is {\em complementary} to $\chan N$.
\end{definition}
Equivalently, we have
\begin{equation}
{\cchan N}(\rho) = (\tr \otimes \id) (V \rho V^\dagger).
\end{equation}
The channel $\cchan N$ maps the initial state of the system to the final state of the environment.

It is clear from the definition that $\chan N$ is also complementary to $\cchan N$.

Since the isometry $V$ associated with $\chan N$ is not unique as observed above, there are correspondingly many channels complementary to $\chan N$. For instance, if $\cchan N$ maps the input system to the environment $E$, and $W$ is an isometry from $E$ to $E'$, then the channel $\chan N'$ defined by
\begin{equation}
(\cchan N')^\dagger(B) = (V')^\dagger (\one \otimes B) V',
\end{equation}
where
\(
V' = (\one \otimes W) V,
\)
is also complementary to $\chan N$.



The connection between the isometry $V$ and the channels' Kraus operators can be found by introducing an orthonormal basis $\ket{i}$ of the environment as follows:
\begin{equation}
\begin{split}
{\chanh N}(A) &= V^\dagger (A \otimes \one) V = \sum_i V^\dagger (A \otimes \proj{i}) V\\
&= \sum_i V^\dagger (\one \otimes \ket{i}) A  (\one \otimes \bra{i}) V.
\end{split}
\end{equation}
Hence, we can use
\begin{equation}
E_i = (\one \otimes \bra{i}) V,
\end{equation}
which is defined by
\begin{equation}
\bra{\psi} E_i = (\bra{\psi} \otimes \bra{i}) V.
\end{equation}
This implies that the complementary channel associated with the isometry $V$ can be written in a dual form as
\begin{equation}
\cchanh N(B) = \sum_{ij} \bra{i} B \ket{j} E_i^\dagger E_j.
\end{equation}


\subsection{Minimal dimension of the environment}

\begin{definition}
Given a channel $\chan N$, we write $|\chan N|$ for the minimal number of Kraus operators with which it can be represented. This corresponds to the minimal dimension of the environment for which it has an isometric implementation (Eq.~\eqref{dilation}).
\end{definition}

If the channel $\chan N$ maps states over the Hilbert space $\Hil_A$ to states over $\Hil_B$, then \cite{choi75}
\begin{equation}
1 \le |\chan N| \le \dim(\Hil_A)\dim(\Hil_B).
\end{equation}
In addition, $|\chan N| = 1$ if and only if
\begin{equation}
\chan N(\rho) = V \rho V^\dagger
\end{equation}
for some isometry $V$.

\subsection{Postprocessing order and equivalence relation}

\begin{definition}
Given two channels $\chan N$ and $\chan M$, we write
\begin{equation}
\chan N \lnt \chan M
\end{equation}
if there exists a channel $\chan R$ such that $\chan R \chan N = \chan M$ (i.e., $\chan R(\chan N(\rho)) = \chan M(\rho)$ for all states $\rho$).
We also write
\begin{equation}
\chan N \sim \chan M
\end{equation}
when we have both $\chan N \lnt \chan M$ and $\chan M \lnt \chan N$.
\end{definition}

Note that this relation $\lnt$ can equivalently be defined dropping the requirement that $\chan R$ be trace-preserving. Indeed, $\chan R$ can always be completed to the trace-preserving channel
\begin{equation}
\chan R'(\rho) = \chan R(\rho) + \tr[(\one-\chanh R(\one))\rho] \sigma,
\end{equation}
where $\sigma$ is an arbitrary state, such that $\chan R'\chan N = \chan M$.

It is easy to see that the relation $\lnt$ is a preorder on all channels, and that $\sim$ is an equivalence relation.

The relation $\chan N \sim \chan M$ can be seen as a precise way of saying that the two channels carry the same information about the initial system, independently of further processing. For this reason, we will mostly focus on the equivalence classes rather than on individual channels.

Note that $\chan N \sim \chan M$ does not imply that $\chan N$ and $\chan M$ are related by a unitary map. For instance, consider the family of channels $\chan S_{\tau}(\rho) := \tr(\rho) \tau$. These channels are all equivalent since $\chan S_\tau = \chan S_\tau \chan S_{\tau'}$. But they can be related by a unitary map if and only if $\tau$ and $\tau'$ have the same spectrum. 

Let us state a few elementary facts in order to build up some intuition about this relation. All channels are bounded from above by the unitary channels (in particular), which are all equivalent to the identity channel $\id$, and from below by the channels $S_\tau$ mentioned above, which are all equivalent to the trace $\tr$ (which is the only channel with a target space of dimension $1$):
\begin{equation}
\id \lnt \chan N \lnt \tr.
\end{equation}

Note that for any channels $\chan N$, $\chan M$,  $\chan N'$, and $\chan M'$,
\begin{equation}
\chan N \otimes \chan N' \lnt \chan M \otimes \chan M' \Longleftrightarrow \chan N \lnt \chan M \text{ and } \chan N' \lnt \chan M'.
\end{equation}
In particular,
\begin{equation}
\chan N \lnt \chan M \Longleftrightarrow \chan N \otimes \id \lnt \chan M \otimes \id.
\end{equation}
Since channels are trace-preserving, we always have
\begin{equation}
\tr \comp \chan N = \tr.
\end{equation}

Let us show that all the channels complementary to a given one belong to the same equivalence class.
\begin{lemma}
If $\cchan N$ and $\cchan N'$ are both complementary to $\chan N$, then $\cchan N \sim \cchan N'$.
\end{lemma}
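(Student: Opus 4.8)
The plan is to reduce everything to a single minimal dilation and then exploit the fact, already noted, that $\sim$ is an equivalence relation. Since $\cchan N$ and $\cchan N'$ arise from two dilations $V$ and $V'$ of $\chan N$, with environments $E$ and $E'$ respectively, I would first invoke Stinespring's uniqueness theorem: if $V_0$ denotes a \emph{minimal} dilation of $\chan N$ (environment $E_0$ of minimal dimension $|\chan N|$, equivalently a linearly independent Kraus family), then any dilation factors through it, i.e.\ there exist isometries $W : E_0 \to E$ and $W' : E_0 \to E'$ with $V = (\one \otimes W) V_0$ and $V' = (\one \otimes W') V_0$. It then suffices to prove $\cchan N \sim \cchan N_0$ for the complementary channel $\cchan N_0$ built from $V_0$, since the same argument gives $\cchan N' \sim \cchan N_0$ and transitivity of $\sim$ closes the argument.

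To compare $\cchan N$ with $\cchan N_0$, I would compute $\cchan N(\rho) = (\tr \otimes \id)(V \rho V^\dagger)$ using $V = (\one \otimes W)V_0$. Writing $X = V_0 \rho V_0^\dagger$ as an operator on $\Hil_B \otimes E_0$ and pushing the partial trace through, a short calculation shows $(\tr\otimes\id)\big((\one\otimes W) X (\one\otimes W^\dagger)\big) = W\,(\tr\otimes\id)(X)\,W^\dagger$, so that
\begin{equation}
\cchan N(\rho) = W\, \cchan N_0(\rho)\, W^\dagger .
\end{equation}
In other words $\cchan N = \chan W \circ \cchan N_0$, where $\chan W(\sigma) := W \sigma W^\dagger$ is the isometric channel attached to $W$. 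This already yields $\cchan N_0 \lnt \cchan N$, taking $\chan R = \chan W$.

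For the reverse relation $\cchan N \lnt \cchan N_0$, I would use that $W$ is an isometry, $W^\dagger W = \one_{E_0}$, so the completely positive map $\sigma \mapsto W^\dagger \sigma W$ inverts $\chan W$ on its range: $W^\dagger \big(W \cchan N_0(\rho) W^\dagger\big) W = \cchan N_0(\rho)$. By the remark preceding the lemma, the relation $\lnt$ may be tested with maps that are only trace-nonincreasing, since they can always be completed to channels; hence this map witnesses $\cchan N \lnt \cchan N_0$. Therefore $\cchan N \sim \cchan N_0$, and symmetrically $\cchan N' \sim \cchan N_0$, so transitivity gives $\cchan N \sim \cchan N'$.

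The main obstacle is the structural input in the first step---the factorization of an arbitrary dilation through the minimal one via an isometry on the environment. Once this is granted, the rest is an elementary computation with the partial trace, together with the observation that appending an isometry on the environment corresponds exactly to post-processing the complementary channel by that isometry. I would take care that $W$ and $W'$ are genuine isometries, not merely partial isometries, which is precisely what minimality of $V_0$ guarantees; routing through the minimal dilation is what keeps both inclusions clean and avoids having to match the possibly unequal dimensions of $E$ and $E'$ directly.
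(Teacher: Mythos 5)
Your proof is correct and follows essentially the same route as the paper's: both rest on the fact that the two complementary channels are related by conjugation with an isometry $W$ on the environment, which gives one direction of $\lnt$ immediately and is undone in the other direction by the trace-nonincreasing map $\sigma \mapsto W^\dagger \sigma W$ completed to a channel. The only cosmetic difference is that you factor both dilations through the minimal one and invoke transitivity of $\sim$, whereas the paper compares the two complementary channels directly after assuming without loss of generality that one environment is at least as large as the other.
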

\proof
Suppose without loss of generality that the dimension of the output of $\cchan N'$ is larger or equal to that of $\cchan N$. Then there is an isometry $W$ such that
\begin{equation}
\cchan N'(\rho) = W \cchan N(\rho) W^\dagger
\end{equation}
for all $\rho$. Hence, clearly, $\cchan N' \lnt \cchan N$. In order to show that also $\cchan N \lnt \cchan N'$, we need to build a channel $\chan R$ such that $\cchan N = \chan R \cchan N'$. Since $V^\dagger V = \one$, we would like to use the completely positive map $\rho \mapsto V^\dagger \rho V$. Unfortunately, this map is not trace-preserving. Instead, letting $P := VV^\dagger$, we use
\begin{equation}
\chan R(\rho) = V^\dagger \rho V + \tr((\one - P) \rho).
\end{equation}
It is easy to check that this map is trace-preserving. Furthermore, since $(\one - P) V = 0$, the extra term does not affect the ability of $\chan R$ to invert $\cchan N'$.
\qed

Note that the equivalence class associated with $\cchan N$ is in general larger than the set of channels complementary to $\chan N$.  We will regard the channels belonging to this class as \textit{generalized} complementary channels and will typically denote them by a tilde, i.e., as $\gcchan{N}$.
\begin{definition}
The channel $\gcchan{N}$ is a generalized complementary channel of $\chan N$ if $\gcchan{N}\sim {\cchan N}$, where ${\cchan N}$ is complementary to $\chan N$.
\end{definition}

The generalized complementary channels satisfy the following important property:

\begin{theorem}
If $\gcchan{N}$ and $\gcchan{M}$ are generalized complementary channels of $\chan N$ and $\chan M$, respectively, then
\label{fund}
\begin{equation}
\chan N \lnt \chan M \Longleftrightarrow \gcchan{M} \lnt \gcchan{N}.
\end{equation}
\end{theorem}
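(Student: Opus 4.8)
The plan is to reduce the statement to the canonical complementary channels and then to establish a single implication. Since $\lnt$ is a preorder, it descends to the equivalence classes of $\sim$: if $\gcchan N \sim \cchan N$ and $\gcchan M \sim \cchan M$, then transitivity gives $\gcchan M \lnt \gcchan N \iff \cchan M \lnt \cchan N$, so it is enough to prove
\[
\chan N \lnt \chan M \Longleftrightarrow \cchan M \lnt \cchan N
\]
for some fixed complementary channels $\cchan N$ and $\cchan M$. Moreover I only need the forward implication ``$\Rightarrow$''. Applying it with $(\cchan M,\cchan N)$ in the roles of $(\chan N,\chan M)$ shows that $\cchan M \lnt \cchan N$ entails that a channel complementary to $\cchan N$ can be postprocessed from one complementary to $\cchan M$; but $\chan N$ is complementary to $\cchan N$, so the previous Lemma identifies these (up to $\sim$) with $\chan N$ and $\chan M$, yielding $\chan N \lnt \chan M$ and hence the converse.

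For the forward implication, suppose $\chan R \chan N = \chan M$. I fix a dilation $V$ of $\chan N$ with environment $E$, so that $\chan N(\rho) = (\id\otimes\tr_E)(V\rho V^\dagger)$ and $\cchan N(\rho) = (\tr\otimes\id)(V\rho V^\dagger)$, together with a dilation $W$ of $\chan R$ with environment $E'$. I then set
\[
U := (W\otimes\one_E)\,V,
\]
which is again an isometry, since $U^\dagger U = V^\dagger(W^\dagger W\otimes\one_E)V = \one$. Because $W$ commutes with the partial trace over $E$, tracing out the combined environment $E'\otimes E$ reproduces $\chan R\chan N = \chan M$, so $U$ is a dilation of $\chan M$ with environment $E'\otimes E$.

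The key step is to read off the complementary channel from this particular dilation. The channel complementary to $\chan M$ associated with $U$, call it $\cchan M_U$, keeps the final state of $E'\otimes E$, and by the previous Lemma $\cchan M_U \sim \cchan M$. A short computation gives $\cchan M_U(\rho) = (\cchan R\otimes\id_E)(V\rho V^\dagger)$. Since $\cchan R$ is trace-preserving, $\tr_{E'}\circ\cchan R = \tr$ on the output of $\chan N$, whence
\[
(\tr_{E'}\otimes\id_E)\,\cchan M_U(\rho) = (\tr\otimes\id_E)(V\rho V^\dagger) = \cchan N(\rho).
\]
Thus the partial trace over $E'$ postprocesses $\cchan M_U$ into $\cchan N$, giving $\cchan M_U \lnt \cchan N$ and therefore $\cchan M \lnt \cchan N$.

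I expect the only genuine obstacle to be conceptual rather than computational. A direct attempt to build a recovery $\chan S$ with $\chan S\cchan M = \cchan N$ appears to require reconstructing the discarded output of $\chan N$ from environment data, which is impossible in general. Routing everything through the single dilation $U$ dissolves this difficulty: the environment $E'\otimes E$ built this way literally contains the environment $E$ of $\chan N$ as a tensor factor, so the required map is merely a partial trace, and trace-preservation of $\cchan R$ is precisely what guarantees it lands on $\cchan N$. The remaining bookkeeping—that $U$ is isometric and that $\cchan M_U\sim\cchan M$—is routine given the preliminary Lemma.
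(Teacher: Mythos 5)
Your proposal is correct and follows essentially the same route as the paper: reduce to a single implication via the symmetry of complementation, build a dilation of $\chan M = \chan R\chan N$ by composing the dilations of $\chan R$ and $\chan N$, and use trace-preservation of $\cchan R$ to recover $\cchan N$ from the resulting complementary channel by a partial trace. You merely make explicit (via the isometry $U=(W\otimes\one_E)V$) the step the paper states without proof, namely that $(\gcchan{R}\otimes\id)\circ\chan V_{\chan N}$ is a generalized complementary channel of $\chan R\chan N$.
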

\proof
It is sufficient to show that $\chan N \lnt \chan M \Longrightarrow \gcchan{M} \lnt \gcchan{N}$.
For any channel $\chan N$, we will write
\begin{equation}
\chan V_{\chan N}(\rho) = V_{\chan N} \rho V_{\chan N}^\dagger
\end{equation}
for an isometric map characterizing a dilation of $\chan N$, i.e., such that
\(
\chan N = (\id \otimes \tr) \chan V_{\chan N}.
\)
Note that if $\gcchan{R}$ is a generalized complementary channel of $\chan R$,
then
\(
(\gcchan{R} \otimes \id) \comp \chan V_{\chan N}
\)
is a generalized complementary channel of $\chan R \comp \chan N$.

If $\chan N \lnt \chan M$, then there is a channel $\chan R$ such that $\chan R \chan N = \chan M$. It follows that
\begin{equation}
\gcchan{M} \sim (\gcchan{R} \otimes \id) \comp \chan V_{\chan N}.
\end{equation}
But then
\begin{equation}
\gcchan{N} \sim (\tr \otimes \id) \comp \chan V_{\chan N} = (\tr \comp \gcchan{R} \otimes \id) \comp \chan V_{\chan N} \sim(\tr \otimes \id) \comp \gcchan{M},
\end{equation}
which implies that $\gcchan{M} \lnt \gcchan{N}$.
\qed


\subsection{Exact error correction}
\label{qec}

Theorem \ref{fund} directly yields the Knill-Laflamme conditions for exact quantum error correction on standard (subspace) codes. Suppose that the ``noise'' on the physical Hilbert space $\Hil$ is modeled by a channel with Kraus operators $E_i$. A standard code can be seen as a subspace of $\Hil$, or, equivalently, as an isometric encoding map $V: \Hil_C \hookrightarrow \Hil$ from the logical space $\Hil_C$ to the physical space $\Hil$. The projector $V V^\dagger = P$ projects on a subspace of $\Hil$ isomorphic to $\Hil_C$, which is usually called itself the ``code''.
The channel to consider then is
\begin{equation}
\chan N(\rho) = \sum_i E_i V \rho V^\dagger E_i^\dagger
\end{equation}
from $\Hil_C$ to $\Hil$. It implements the encoding followed by the noise. Given the code $P$ (or equivalently the isometric encoding $V$), the problem of quantum error correction is to find a {\em correction channel} $\chan R$ mapping $\Hil$ back to $\Hil_C$ such that
\begin{equation}
\label{standard_qec}
\chan R \chan N = \id_C,
\end{equation}
where $\id_C$ is the identity map on the logical space $\Hil_C$.

Hence, we can directly apply Theorem \ref{fund} with $\chan M = \id_C$. It is easy to see that all channels complementary to $\id_C$ are of the form
\begin{equation}
\rho \mapsto \proj{\psi} \, \tr(\rho)
\end{equation}
for an arbitrary pure state $\ket{\psi}$ living in a space of arbitrary dimension. In addition, one can show that all channels similar to such complementary channels are of the form
\begin{equation}
\rho \mapsto \tau \, \tr(\rho)\label{gencompid}
\end{equation}
for some mixed state $\tau$, i.e., Eq.~\eqref{gencompid} describes the generalized complementary channels to $\id_C$. Picking any of these channels, Theorem \ref{fund} states that the existence of a channel $\chan R$ satisfying Eq.~\eqref{standard_qec} is equivalent to the existence of a channel $\chan R'$ such that
\begin{equation}
\cchan N(\rho) = \chan R'(\tau) \, \tr(\rho).
\end{equation}
But since $\tau$ is fixed, this is equivalent to the existence of a state $\sigma$ such that
\begin{equation}
\cchan N(\rho) = \sigma \, \tr(\rho)
\end{equation}
for any state $\rho$. This is the Knill-Laflamme condition. This is most easily seen from its dual (Heisenberg picture) form, which reads
\begin{equation}
\cchanh N(A) = \tr( A \sigma ) \one
\end{equation}
for all operators $A$. It is enough to check this condition for a basis $\ketbra i j$ of the space of operators, which yields
\begin{equation}
\cchanh N(\ketbra i j ) = V^\dagger E_i^\dagger E_j V = \bra{j} \sigma \ket{i} \one.
\end{equation}
Multiplying on the left and on the right by $V$ and $V^\dagger$, respectively, we obtain
\begin{equation}
PE_i^\dagger E_j P = \lambda_{ij} P,
\end{equation}
where $\lambda_{ij} = \bra{j} \sigma \ket{i}$, which is the Knill-Laflamme condition in its most familiar form.

Moreover, we also obtain the generalized Knill-Laflamme conditions for the correctability of subsystem codes~\cite{kribs05,kribs06,poulin05x1}, or in fact any algebra, of which subsystem codes are a special case~\cite{beny07x1}.
A $\dagger$-algebra  (or algebra for short) is a set of operators closed under multiplication and which also contains the adjoint of all its elements. For instance, suppose that our Hilbert space $\Hil$ is divided into two subsystems: $\Hil = \Hil_A \otimes \Hil_B$. Then consider the set $\mathcal A$ of operators of the form $A \otimes \one$, where $A$ is an operator on $\Hil_A$ and $\one$ the identity on $\Hil_B$. It is trivial to show that $\mathcal A$ is an algebra. It represents all the local observables acting on $\Hil_1$. In fact this is close to being the most general form of a $\dagger$-algebra. For any $\dagger$-algebra $\mathcal A$ we can find a decomposition of the Hilbert space into orthogonal subspaces $\Hil_i$ which are left invariant by all elements of the algebra. Furthermore, when restricted to any of these invariant subspaces, the algebra has precisely the form described in the above example. Hence, the algebra defines a set of subsystems living in a family of orthogonal subspaces. This means that any element $A \in \mathcal A$ is of the form
\begin{equation}
A = \sum_i A_i \otimes \one_i,
\end{equation}
where $A_i \otimes \one_i$ is an operator supported on $\Hil_i$. Said differently, if $P_i$ is the projector on $\Hil_i$, then $P_i A P_i = A_i \otimes \one_i$.

A handy tool is the projector $\chan P_{\mathcal A}$ on this algebra, which we take to be orthogonal in terms of the Hilbert-Schmidt inner product between operators. This is a quantum channel satisfying $\chan P_{\mathcal A}^2 = \chan P_{\mathcal A} = \chan P_{\mathcal A}^\dagger$, whose range is precisely $\mathcal A$. It has the following explicit form:
\begin{equation}
\label{proj}
\chan P_{\mathcal A}(\rho) = \sum_i \tr_2(P_i \rho P_i) \otimes \frac{\one_i}{m_i},
\end{equation}
where $\tr_2$ is the partial trace over the second subsystem of the $i$th subspace, and $m_i$ is the dimension of that subsystem.

We say that an algebra $\mathcal A$ is correctable for the channel ${\chan N}$ if there exists a ``correction'' channel $\mathcal R$ such that for all $A \in \mathcal A$,
\begin{equation}
\label{ecd}
(\mathcal R \circ {\chan N})^\dagger(A) = A.
\end{equation}


Note that $\mathcal A$ contains the spectral projectors of any observable $A \in \mathcal A$. Hence, this definition implies that measuring $A$ before the action of the channel ${\chan N}$ or after the correction will yield the same probabilities, no matter what the initial state was.

Clearly, Eq.~\eqref{ecd} implies that $\chan P_{\mathcal A}\circ \mathcal R \circ {\chan N} = \chan P_{\mathcal A}$.
Hence, an equivalent formulation is to require the existence of a (possibly different) channel $\mathcal R$ such that
\begin{equation}
\label{ecd2}
\mathcal R \circ {\chan N} = \chan P_{\mathcal A}.
\end{equation}

This puts the problem in a form suitable for the application of Theorem~\ref{fund}, since it says that $\chan N \lnt \chan P_{\mathcal A}$.
If $\mathcal A'$ denotes the algebra formed by all operators commuting with all elements of $\mathcal A$ (i.e., the {\em commutant} of $\mathcal A$), one can show that the channel $P_{\mathcal A'}$ is a generalized complementary channel of $P_{\mathcal A}$. Elements of the commutant have the form $B = \sum_i \one_i \otimes B_i$ for the same decomposition into orthogonal subspaces. Explicitly,
\begin{equation}
\chan P_{\mathcal A'}(\rho) = \sum_i  \frac{\one_i}{n_i} \otimes \tr_1(P_i \rho P_i),
\end{equation}
where $\tr_1$ is the partial trace over the first subsystem of the $i$th subspace, and $n_i$ is the dimension of that subsystem.

Hence, Theorem~\ref{fund} states that the existence of $\mathcal R$ that satisfies Eq.~\eqref{ecd2} is equivalent to the existence of $\mathcal R'$ such that
\begin{equation}
\cchan N = \mathcal R' \chan P_{\mathcal A'}.
\end{equation}
But since $\chan P_{\mathcal A'}$ is a projector, this is equivalent to requiring that
\begin{equation}
\cchan N = \cchan N \chan P_{\mathcal A'},
\end{equation}
or, in the Heisenberg picture,
\begin{equation}
\cchanh N = \chan P_{\mathcal A'} \cchanh N.
\end{equation}
The latter is equivalent to requiring that for all operators $A$,
\begin{equation}
\cchanh N(A) \in \mathcal A',
\end{equation}
or, for all $i, j$, assuming $\chan N$ has the form previously used,
\begin{equation}
\cchanh N(\ketbra i j) = V^\dagger E_i^\dagger E_j V \in \mathcal A'.
\end{equation}
This is the form of the conditions derived, using a different method, in Ref.~\cite{beny07x1}.

\section{Main theorem}
\label{main-thm-section}

Let $f(\rho,\sigma) = \tr\sqrt{\sqrt \rho \, \sigma \sqrt \rho}$ be the fidelity \cite{uhlmann76} between states $\rho$ and $\sigma$. For reasons that will become clear, we extend the definition of this function to all positive operators $\rho$ or $\sigma$ of trace smaller than or equal to one (note that for operators of trace smaller than one, $f(\rho,\sigma)$ does not have the meaning of fidelity). Even for this more general concept, Uhlmann's theorem \cite{uhlmann76} holds, i.e, we have the alternative expression
\begin{equation}
\begin{split}
f(\rho,\sigma) &= \max_{V} |\bra{\psi_\rho} (\one \otimes V) \ket{\psi_\sigma}|\\
\end{split},
\end{equation}
where $\ket{\psi_\rho}$ and $\ket{\psi_\sigma}$ are any purifications of $\rho$ and $\sigma$, respectively, and the maximization runs over all isometric operators $V$ between the extra reference systems. (Note that here either $V^\dagger V = \one$ or $V V^\dagger =\one$ depending on which of the two purifications is of larger dimension).
Since the quantity $f(\rho,\sigma)$ is real, we can optimize the real part rather then the absolute value of the expression $\bra{\psi_\rho} (\one \otimes V) \ket{\psi_\sigma}$. In addition, one can show that the optimization can be done over all operators of norm smaller than one rather than just the isometric operators, i.e.,
\begin{equation}
\label{fidelity}
\begin{split}
f(\rho,\sigma) &= \max_{\|A\| \le 1} \re \bra{\psi_\rho} (\one \otimes A) \ket{\psi_\sigma}\\
&= \max_{\|A\| \le 1} \re
\xy (0,0)*{ \xy (-2,2)*{\psi_\sigma};(0,-2)*{};(0,6)*{}**\crv{(-6,-2)&(-6,6)};(0,-2)*{};(0,6)*{}**\dir{-};(0,0)*{};(4,0)*{};**\dir{-};(4,-2)*{};(8,-2)*{}**\dir{-};(8,-2)*{};(8,2)*{}**\dir{-};(8,2)*{};(4,2)*{}**\dir{-};(4,2)*{};(4,-2)*{}**\dir{-};(6,0)*{A};(8,0)*{};(12,0)*{};**\dir{-};{\ar(0,4)*{};(7,4)*{}};(7,4)*{};(12,4)*{};**\dir{-};(14,2)*{\psi_\rho};(12,-2)*{};(12,6)*{}**\crv{(18,-2) & (18,6)};(12,-2)*{};(12,6)*{}**\dir{-}; \endxy }; \endxy .
\end{split}
\end{equation}
The diagram can be thought of as a circuit where the boxes are operators which are not necessarily unitary, nor even square matrices. The left half circles represent input states, while the right half circles represent states which are scalar-multiplied with the corresponding ouputs. The above diagram thus represents the scalar product between the output of the circuit, $(\one \otimes A) \ket{\psi_\sigma}$, and the state $\ket{\psi_\rho}$.

For a given state $\rho$, we introduce the ``entanglement fidelity'' between channels $\chan N$ and $\chan M$,
\begin{equation*}
\begin{split}
F_\rho(\chan N, \chan M) &= f((\chan N \otimes \id)(\proj{\psi}), (\chan M \otimes \id)(\proj{\psi}))\\
&= \max_{\|A\| \le 1} \re
\xy (0,0)*{ \xy (-2,6)*{\psi};(0,2)*{};(0,10)*{}**\crv{(-6,2)&(-6,10)};(0,2)*{};(0,10)*{}**\dir{-};(0,4)*{};(4,4)*{};**\dir{-};(4,-2)*{};(12,-2)*{}**\dir{-};(12,-2)*{};(12,6)*{}**\dir{-};(12,6)*{};(4,6)*{}**\dir{-};(4,6)*{};(4,-2)*{}**\dir{-};(8,2)*{V_{\mathcal N}};(12,0)*{};(16,0)*{};**\dir{-};(16,-2)*{};(20,-2)*{}**\dir{-};(20,-2)*{};(20,2)*{}**\dir{-};(20,2)*{};(16,2)*{}**\dir{-};(16,2)*{};(16,-2)*{}**\dir{-};(18,0)*{A};(20,0)*{};(24,0)*{};**\dir{-};{\ar(12,4)*{};(19,4)*{}};(19,4)*{};(24,4)*{};**\dir{-};(24,-2)*{};(32,-2)*{}**\dir{-};(32,-2)*{};(32,6)*{}**\dir{-};(32,6)*{};(24,6)*{}**\dir{-};(24,6)*{};(24,-2)*{}**\dir{-};(28,2)*{V_{\mathcal M}^\dagger};(32,4)*{};(36,4)*{};**\dir{-};{\ar(0,8)*{};(19,8)*{}};(19,8)*{};(36,8)*{};**\dir{-};(38,6)*{\psi};(36,2)*{};(36,10)*{}**\crv{(42,2) & (42,10)};(36,2)*{};(36,10)*{}**\dir{-}; \endxy }; \endxy
\end{split},
\end{equation*}
where $\ket{\psi}$ is a purification of $\rho$. When $\chan M = \id$, this quantity reduces to the square root of Schumacher's entanglement fidelity of $\chan N$ \cite{schumacher96}. We will compare channels using the worst-case entanglement fidelity,
\begin{equation}
F(\chan N, \chan M) = \min_\rho F_\rho(\chan N, \chan M),
\end{equation}
which was studied in Ref.~\cite{gilchrist05}.

\begin{theorem}
\label{main} If $\cchan N$ and $\cchan M$ are complementary to $\chan N$ and $\chan M$, respectively, then for any $d>1$,
\begin{equation}
\label{fidequ} \max_{|\chan R| \le d} F(\chan R \chan N, \chan M) = \max_{|\chan R'| \le d} F(\cchan N, \chan R' \cchan M),
\end{equation}
where the maxima are over all {\it trace-nonincreasing} completely positive maps with the appropriate source and target spaces, and $|\chan R|$ stands for the minimal number of Kraus operators for $\chan R$.
\end{theorem}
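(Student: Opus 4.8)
The plan is to push both sides of~\eqref{fidequ} into the Stinespring/Uhlmann picture, show that they become one and the same optimization of a bilinear functional over a constrained family of contractions, and then argue that the primal and dual families are interchangeable. First I would fix isometric dilations $V_{\chan N}$ and $V_{\chan M}$ as in~\eqref{dilation}, so that $\cchan N$ and $\cchan M$ arise from the \emph{same} isometries with the output and environment legs exchanged. Dilating $\chan R\chan N$ by $(V_{\chan R}\otimes\one)V_{\chan N}$ and applying the variational formula~\eqref{fidelity},
\begin{equation}
F_\rho(\chan R\chan N,\chan M)=\max_{\|A\|\le 1}\re\bra{\psi}\big(V_{\chan M}^\dagger\, T\, V_{\chan N}\otimes\one\big)\ket{\psi},
\end{equation}
where $\ket{\psi}$ purifies $\rho$ and $T=(\one\otimes A)(V_{\chan R}\otimes\one)$ is a contraction from the output-times-environment space of $\chan N$ to that of $\chan M$. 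Repeating the computation on the dual side (taking an adjoint and relabelling the Uhlmann operator) yields the \emph{identical} functional $\re\bra{\psi}(V_{\chan M}^\dagger T V_{\chan N}\otimes\one)\ket{\psi}$, now with $T=(V_{\chan R'}^\dagger\otimes\one)(\one\otimes B)$, $\|B\|\le1$. Thus the two sides differ only in the family of contractions $T$ over which one optimizes.

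The second step removes the $\rho$-dependence of the Uhlmann operator. For a \emph{fixed} recovery, the functional above is bilinear in the pair (Uhlmann operator, input $\rho$) over convex compact sets, so a minimax theorem exchanges $\min_\rho$ with the Uhlmann maximization; the worst-case fidelity is then $\max\min_\rho$ with a single, $\rho$-independent $A$ (resp.\ $B$). Consequently both sides of~\eqref{fidequ} take the form $\max_{T\in\mathcal S}\min_\rho\re\bra{\psi}(V_{\chan M}^\dagger T V_{\chan N}\otimes\one)\ket{\psi}$, over the primal family $\mathcal S_{\mathrm{pri}}$ and the dual family $\mathcal S_{\mathrm{dual}}$ respectively.

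The key observation is that both families consist of the same kind of object: a contraction factoring through an intermediate ``bond'' space of dimension at most $d$, one leg carrying the $\chan N$-output to the $\chan M$-output and the other carrying the environments. The sole difference is that in $\mathcal S_{\mathrm{pri}}$ the first factor $V_{\chan R}$ is an isometry and the second a contraction, whereas in $\mathcal S_{\mathrm{dual}}$ the first is a contraction and the second $V_{\chan R'}^\dagger$ a co-isometry. Since for fixed $\rho$ the functional is \emph{linear} in $T$, its maximum over either family is a support function; and maximizing a multilinear form over a product of operator balls is attained at extreme points (isometries and co-isometries), so both support functions equal the maximum over all bond-$\le d$ contraction pairs. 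Hence the two $\rho$-wise optima — the best fidelities when the recovery is allowed to depend on $\rho$ — coincide; call this common value $U$. This is the quantitative shadow of Theorem~\ref{fund}: the dual recovery is built from a complement $\gcchan R$ of $\chan R$, and the bond dimension, being the number of Kraus operators, is bounded by $d$ symmetrically on both sides.

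The main obstacle is to upgrade this to the equality of the genuine worst-case values in~\eqref{fidequ}, i.e.\ to show that each side equals $U$ (not merely is bounded by it) after forcing the recovery to be independent of $\rho$. After fixing the Uhlmann operator the objective is bilinear in $T$ and in $\rho$, so the minimax passes through the convex hull of the recovery family and delivers $U$; the difficulty is that the constraint $|\chan R|\le d$ defines a \emph{non-convex} family — mixing recoveries raises the Kraus number — so one must prove that the possibly $\rho$-dependent, mixed optimizer underlying $U$ can be replaced by a single recovery with at most $d$ Kraus operators (a Carathéodory-type control of the Kraus number), or, equivalently, carry out the explicit complementation construction of Theorem~\ref{fund} and use Uhlmann's freedom to match the fidelities directly. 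Finally, since a channel complementary to $\cchan N$ is equivalent to $\chan N$, the setup is symmetric under exchanging the primal and dual problems, so a single inequality suffices: from an optimal $\chan R$ one constructs $\chan R'$ out of $\gcchan R$, and the reverse direction follows by applying the same construction to the dual problem.
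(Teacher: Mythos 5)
Your overall strategy---dilate everything, recognize both sides as the same bilinear circuit functional with the roles of the recovery operator and the Uhlmann operator exchanged, then invoke a minimax theorem---is the paper's strategy. But the proof does not close: the obstacle you flag in your last paragraph is left unresolved (you offer two possible repairs, a Carath\'eodory-type control of the Kraus number or redoing the exact complementation construction of Theorem~\ref{fund}, and carry out neither), and the reason you run into it is a suboptimal parametrization of the recovery. You dilate $\chan R$ by an \emph{isometry} $V_{\chan R}$, i.e.\ you treat it as trace-preserving. This forces the detour through extreme points of operator balls, leaves you with two genuinely different families $\mathcal S_{\mathrm{pri}}$ and $\mathcal S_{\mathrm{dual}}$ whose worst-case (as opposed to pointwise-in-$\rho$) optima you cannot identify, and leads you to worry that the set of maps with $|\chan R|\le d$ is non-convex.

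The missing observation is that a \emph{trace-nonincreasing} CP map with at most $d$ Kraus operators is exactly a map of the form $\chan R(\rho)=\tr_{\widetilde E}(A\rho A^\dagger)$ with $\|A\|\le 1$ and $\dim\widetilde E=d$; this is precisely why the theorem is stated for trace-nonincreasing maps rather than channels. The parameter set $\{A:\|A\|\le 1\}$ is convex and compact, and the circuit functional is \emph{linear} in $A$ (only one copy of $A$ enters the Uhlmann overlap, the conjugate copy being absorbed into the bra), so the apparent non-convexity of the constraint $|\chan R|\le d$ disappears at the Stinespring level and no extreme-point or Carath\'eodory argument is needed. With this parametrization the left-hand side of~\eqref{fidequ} is $\max_{\|A\|\le1}\min_\rho\max_{\|A'\|\le1}\re g_\rho(A,A')$, the right-hand side is the same expression with the roles of $A$ and $A'$ interchanged (your reflection symmetry), and a single application of Shiffman's minimax theorem to the inner $\min_\rho\max_{A'}$ (bilinear objective, convex compact domains) turns both sides into the manifestly symmetric quantity $\max_{A}\max_{A'}\min_\rho\re g_\rho(A,A')$. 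Your closing suggestion---prove one inequality by building $\chan R'$ from a complement $\gcchan R$ of an optimal $\chan R$---is the mechanism of the exact Theorem~\ref{fund} and does not obviously transfer to the approximate setting with matching fidelities, so it cannot substitute for the minimax step.
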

\begin{proof}
The proof closely follows arguments used in Ref.~\cite{kretschmann08}.
Let $V_{\chan N}$ be the isometry for which $\chan N(\rho) = \tr_E (V_{\chan N}\rho V_{\chan N}^\dagger)$ and $\cchan N(\rho) = \tr_B (V_{\chan N}\rho V_{\chan N}^\dagger)$, and $V_{\chan M}$ be the isometry yielding both $\chan M$ and $\cchan M$ in the same way. Note that any trace-nonincreasing channel $\chan R$ can be written as $\chan R(\rho) = \tr_{\widetilde E}(A \rho A^\dagger)$ for some operator $A$ satisfying $\|A\| \le 1$ from the input Hilbert space of $\chan R$ to its output space tensored with an ``environment'' $\widetilde E$. 
Using this fact and writing the fidelity using Eq.~\eqref{fidelity}, we obtain
\begin{gather}
\label{minimax1}
\max_{|\chan R|\le d} F(\chan R \chan N, \chan M) 
=\max_{\| A \| \le 1} \min_{\rho} \max_{\| A' \| \le 1} \re \, g_\rho(A,A'),
\end{gather}
where $g_\rho$ can be expressed in terms of a circuit:
\begin{equation}
\label{maindiag}
g_\rho(A,A') =
\xy(0,0)*+{\xy(-1.8,3)*{\psi};(0,-1.5)*{};(0,7.5)*{}**\crv{(-6,-1.5)&(-6,7.5)};(0,-1.5)*{};(0,7.5)*{}**\dir{-};(0,0)*{};(3,0)*{}**\dir{-};(3,-7.5)*{};(12,-7.5)*{}**\dir{-};(12,-7.5)*{};(12,1.5)*{}**\dir{-};(12,1.5)*{};(3,1.5)*{}**\dir{-};(3,1.5)*{};(3,-7.5)*{}**\dir{-};(7.5,-3)*{V_{\mathcal N}};(12,0)*{};(15,0)*{}**\dir{-};(13.5,2)*{\scriptstyle B};(15,-4.5)*{};(21,-4.5)*{}**\dir{-};(21,-4.5)*{};(21,1.5)*{}**\dir{-};(21,1.5)*{};(15,1.5)*{}**\dir{-};(15,1.5)*{};(15,-4.5)*{}**\dir{-};(18,-1.5)*{A};(21,-3)*{};(24,-3)*{}**\dir{-};(22.5,-1)*{\scriptstyle d};(24,-6)*{};(19,-6)*{}**\dir{-};{\ar(12,-6)*{};(19,-6)*{}};(18,-8)*{\scriptstyle E};(24,-7.5)*{};(30,-7.5)*{}**\dir{-};(30,-7.5)*{};(30,-1.5)*{}**\dir{-};(30,-1.5)*{};(24,-1.5)*{}**\dir{-};(24,-1.5)*{};(24,-7.5)*{}**\dir{-};(27,-4.5)*{{{A'}^\dagger}};(30,-6)*{};(33,-6)*{}**\dir{-};(31.5,-8)*{\scriptstyle E'};(33,0)*{};(28,0)*{}**\dir{-};{\ar(21,0)*{};(28,0)*{}};(27,2)*{\scriptstyle B'};(33,-7.5)*{};(42,-7.5)*{}**\dir{-};(42,-7.5)*{};(42,1.5)*{}**\dir{-};(42,1.5)*{};(33,1.5)*{}**\dir{-};(33,1.5)*{};(33,-7.5)*{}**\dir{-};(37.5,-3)*{V_{\mathcal M}^\dagger};(42,0)*{};(45,0)*{}**\dir{-};(45,6)*{};(23.5,6)*{}**\dir{-};{\ar(0,6)*{};(23.5,6)*{}};(46.8,3)*{\psi};(45,-1.5)*{};(45,7.5)*{}**\crv{(51,-1.5)&(51,7.5)};(45,-1.5)*{};(45,7.5)*{}**\dir{-};\endxy};\endxy
\end{equation}
The small $d$ indicates that the wire below it represents a Hilbert space of dimension $d$, namely, the system $\widetilde E$ mentioned above.
The wires labeled $B$ and $B'$ represent the target systems for $\chan N$ and $\chan M$, respectively, and $E$ and $E'$ are the respective ``environments''. The state $\ket{\psi_\rho}$ can be any purification of $\rho$. If we reflect the picture with respect to a vertical axis through the middle, Hermitian conjugating each operator [this amounts to a complex conjugation of $g_{\rho}(A,A')$], and exchange the wire labels $E'$ and $B$, and $E$ and $B'$, we see that we also have
\begin{equation}
\max_{|\chan R'|\le d} F( \cchan N, \chan R' \cchan M) =\max_{\|A'\| \le 1} \min_{\rho} \max_{\|A\| \le 1} \re\, g_\rho(A,A'),
\end{equation}
where now $A'$ is the operator defining $\chan R'$ while $A$ comes from Eq.~\eqref{fidelity} for the fidelity. Hence, we just have to show that we can exchange the maximizations over $A$ and $A'$ in Eq.~\eqref{minimax1}.
This can be done by applying Shiffman's minimax theorem~\cite{grossinho01} which says that we can exchange the rightmost min and max provided that the function is convex-concave in the two arguments (in this case it is bilinear), and that the variables are optimized over convex sets, which is the case here. Hence, we obtain $\max_{\chan R} F(\chan R \chan N, \chan M) =\max_{A'}\max_{A} \min_{\rho} \re g_\rho(A,A')= \max_{\chan R'} F( \cchan N, \chan R' \cchan M)$, where $\|A'\|, \|A\|\le 1$.
\end{proof}

\begin{proposition}
\label{robustprop}
If $\chan N' \sim \chan N$ and $\chan M' \sim \chan M$, then
\begin{equation}
\label{robust} \max_{\chan R} F(\chan R \chan N, \chan M) = \max_{\chan R} F(\chan R \chan N', \chan M'),
\end{equation}
where the maxima are over all quantum channels with the appropriate source and target spaces.
\end{proposition}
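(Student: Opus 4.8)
The plan is to split the claim into two independent invariances, one for the channel being corrected and one for the target, and to verify each directly from the definition of $\sim$ together with the monotonicity of fidelity under channels. Write $G(\chan A,\chan B):=\max_{\chan R}F(\chan R\chan A,\chan B)$ for the quantity of interest, the maximum running over all channels $\chan R$ with the appropriate spaces. I must show $G(\chan N,\chan M)=G(\chan N',\chan M')$. Since $\chan N\sim\chan N'$ and $\chan M\sim\chan M'$, it suffices to prove the two identities $G(\chan N,\chan M)=G(\chan N',\chan M)$ and $G(\chan N',\chan M)=G(\chan N',\chan M')$ separately and then chain them.

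First I would handle the first slot. By definition of $\sim$ there are channels $\chan S$ and $\chan T$ with $\chan S\chan N=\chan N'$ and $\chan T\chan N'=\chan N$ (trace preservation of $\chan S,\chan T$ may be assumed by the completion remark following the definition of $\lnt$). For any channel $\chan R$ we then have $\chan R\chan N'=(\chan R\chan S)\chan N$ and $\chan R\chan N=(\chan R\chan T)\chan N'$, where $\chan R\chan S$ and $\chan R\chan T$ are again channels. Hence each of the two families $\{\chan R\chan N'\}_{\chan R}$ and $\{\chan R\chan N\}_{\chan R}$ is contained in the other, and maximizing $F(\,\cdot\,,\chan M)$ over each family gives $G(\chan N,\chan M)=G(\chan N',\chan M)$. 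This step is pure composition and uses no property of $F$ beyond its being optimized over the same collection of channels.

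The second slot is the crux, because here $\chan M$ is \emph{not} composed with the optimization variable $\chan R$, so a direct substitution is unavailable. The tool I would use is the data-processing inequality for fidelity, which follows from Uhlmann's theorem as stated above: for any channel $\chan U$ and any states $\rho,\sigma$ one has $f(\chan U(\rho),\chan U(\sigma))\ge f(\rho,\sigma)$. Applying this with $\chan U\otimes\id$ to the purified inputs that define $F$ and then minimizing over $\rho$ yields $F(\chan U\chan A,\chan U\chan B)\ge F(\chan A,\chan B)$ for all channels $\chan A,\chan B$ (if $g\ge h$ pointwise then $\min g\ge\min h$). Now choose channels $\chan U,\chan W$ with $\chan U\chan M=\chan M'$ and $\chan W\chan M'=\chan M$. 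For any $\chan R$, $F(\chan U\chan R\chan N',\chan M')=F\bigl(\chan U(\chan R\chan N'),\chan U\chan M\bigr)\ge F(\chan R\chan N',\chan M)$, and since $\chan U\chan R$ is again a channel the left-hand side is at most $G(\chan N',\chan M')$; maximizing over $\chan R$ gives $G(\chan N',\chan M')\ge G(\chan N',\chan M)$. The symmetric argument with $\chan W$ gives the reverse inequality, so $G(\chan N',\chan M)=G(\chan N',\chan M')$.

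Combining the two invariances yields $G(\chan N,\chan M)=G(\chan N',\chan M')$, as required. I expect the target-slot step to be the main obstacle: one must recognize that equivalence of targets is not a formal substitution but a genuine monotonicity statement, and that the connecting channel $\chan U$ can be absorbed into a redefinition of the free variable $\chan R$ so that no loss occurs in either direction. As a cross-check, I note that this step could alternatively be routed through Theorem~\ref{main}: since $\chan M\sim\chan M'$ implies $\cchan M\sim\cchan{M'}$ by Theorem~\ref{fund}, target invariance on the primal side becomes exactly a first-slot (composition) invariance of $\chan R'\cchan M$ on the dual side, handled as in the second paragraph; but the direct monotonicity argument above is shorter and avoids the complementary-channel and $d$-dependence bookkeeping.
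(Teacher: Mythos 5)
Your proof is correct and uses the same two ingredients as the paper's: the data-processing (monotonicity) inequality for the worst-case entanglement fidelity to move between equivalent channels in the slot not carrying the optimization variable, and absorption of the connecting channels into a redefinition of $\chan R$ for the slot that does. The only difference is organizational --- you chain two separate one-slot invariances, whereas the paper combines both substitutions into a single inequality chain --- so this is essentially the paper's argument.
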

\begin{proof}
We have $\chan N' = \chan S \chan N$, $\chan N = \chan S' \chan N'$, $\chan N' = \chan T \chan N$, $\chan N = \chan T' \chan N'$ for some channels $\chan S, \chan S', \chan T, \chan T'$. Hence,
\begin{equation}
\begin{split}
 \max_{\chan R} F(\chan N, \chan R \chan M) &= F(\chan N, \chan R_0 \chan M)\\
&\le F(\chan S \chan N, \chan S \chan R_0 \chan M)\\
&= F(\chan N', \chan S \chan R_0 \chan T' \chan M')\\
&\le \max_{\chan R} F(\chan N', \chan R\chan M').\\
\end{split}
\end{equation}
The converse inequality follows in the same way.
\end{proof}
\proof
Suppose that $\chan N$ maps operators over $\Hil_A$ to operators over $\Hil_B$, and $\chan M$ maps operators over $\Hil_A$ to operators over $\Hil_C$. Then,
using Proposition~\ref{robustprop} together with the main theorem for $d$ maximal, i.e.,
\begin{equation}
d = \dim(\Hil_A)^2 \dim(\Hil_B)\dim(\Hil_C),
\end{equation}
we obtain a variation of Theorem~\ref{main} which we will find most useful, and which is the direct generalization of Theorem~\ref{fund}:
\begin{corollary}
\label{main2}
If $\gcchan{N}$ and $\gcchan{M}$ are generalized complementary channels of $\chan N$ and $\chan M$, respectively, then
\begin{equation}
\label{fidequ2} \max_{\chan R} F(\chan R \chan N, \chan M) = \max_{\chan R'} F(\gcchan{N}, \chan R' \gcchan{M}),
\end{equation}
where the maxima are over quantum channels with the appropriate source and target spaces.
\end{corollary}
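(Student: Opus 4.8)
The plan is to obtain Corollary~\ref{main2} by chaining Theorem~\ref{main}, specialized to the maximal value of $d$ so that the Kraus-number constraints become vacuous, with Proposition~\ref{robustprop}, which lets one trade the complementary channels for arbitrary equivalent ones. First I would fix \emph{minimal} Stinespring dilations of $\chan N$ and $\chan M$, yielding complementary channels $\cchan N$ and $\cchan M$ whose environments $\Hil_E$ and $\Hil_{E'}$ have minimal dimension. The correcting map on the primal side is then a channel $\chan R\colon \Hil_B \to \Hil_C$, and on the dual side a channel $\chan R'\colon \Hil_{E'} \to \Hil_E$.

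The second step is the dimension count showing that both constraints $|\chan R|\le d$ and $|\chan R'|\le d$ are non-binding for $d = \dim(\Hil_A)^2\dim(\Hil_B)\dim(\Hil_C)$. From the bound $|\chan N| \le \dim(\Hil_A)\dim(\Hil_B)$ and its analogue for $\chan M$, the minimal environments satisfy $\dim(\Hil_E)\le \dim(\Hil_A)\dim(\Hil_B)$ and $\dim(\Hil_{E'})\le \dim(\Hil_A)\dim(\Hil_C)$. Since a channel between two spaces needs no more Kraus operators than the product of their dimensions, $|\chan R|\le\dim(\Hil_B)\dim(\Hil_C)\le d$ and $|\chan R'|\le\dim(\Hil_{E'})\dim(\Hil_E)\le d$. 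Hence Theorem~\ref{main} reduces, for this $d$, to the unconstrained equality
\begin{equation*}
\max_{\chan R} F(\chan R\chan N,\chan M) = \max_{\chan R'} F(\cchan N,\chan R'\cchan M).
\end{equation*}

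Third, I would promote the fixed complementary channels to arbitrary generalized complementary channels. By definition $\gcchan N\sim\cchan N$ and $\gcchan M\sim\cchan M$, and the dual expression above has exactly the form $F(\,\cdot\,,\chan R'\,\cdot\,)$ treated in Proposition~\ref{robustprop} (whose proof in fact establishes invariance precisely for this placement of the optimized channel). Applying it with $\cchan N\mapsto\gcchan N$ and $\cchan M\mapsto\gcchan M$ gives $\max_{\chan R'}F(\cchan N,\chan R'\cchan M)=\max_{\chan R'}F(\gcchan N,\chan R'\gcchan M)$, and composing the two equalities proves the corollary.

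The one genuine subtlety, which I expect to be the main obstacle, is the mismatch of optimization domains: Theorem~\ref{main} maximizes over trace-nonincreasing completely positive maps, whereas the corollary maximizes over channels. I would close this by the same completion used earlier for the preorder $\lnt$: a trace-nonincreasing $\chan R$ extends to the channel $\rho\mapsto\chan R(\rho)+\tr[(\one-\chanh R(\one))\rho]\,\sigma$, which merely adds a positive operator to $(\chan R\chan N\otimes\id)(\proj{\psi})$. Because $f(X,\sigma)=\tr\sqrt{\sqrt\sigma X\sqrt\sigma}$ is nondecreasing in $X$ (the square root being operator monotone), this completion never decreases $F$, so the maximum over channels coincides with the maximum over trace-nonincreasing maps. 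Everything else is bookkeeping.
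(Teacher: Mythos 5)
Your proposal is correct and follows essentially the same route as the paper: Theorem~\ref{main} at the maximal value of $d$ (with your explicit dimension count via minimal dilations filling in a detail the paper only asserts), Proposition~\ref{robustprop} in its second-argument form to pass to generalized complementary channels, and a completion argument showing the maxima are attained by trace-preserving maps. The only cosmetic difference is in that last step, where you invoke operator monotonicity of the square root to get monotonicity of $f$ under adding a positive operator, while the paper uses strong concavity of the fidelity, $f(p\rho+(1-p)\tau,\sigma)\ge\sqrt{p}\,f(\rho,\sigma)=f(p\rho,\sigma)$; both justifications are valid.
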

\proof  
The only thing we have to show is that the maxima taken without constraint on the number of Kraus operators are always attained by trace-preserving maps, i.e. quantum channels.
We use the fact that for any states $\rho$,$\tau$, and $\sigma$, and $0 \le p \le 1$, we have from the strong concavity of the fidelity \cite{nielsen00} that
\begin{equation}
f(p \rho + (1-p) \tau , \sigma) \ge \sqrt p\, f(\rho, \sigma) = f(p \rho, \sigma).
\end{equation}
Suppose that $\chan R$ is a trace-nonincreasing completely positive map. We can always ``complete'' it to a trace-preserving channel
\(
\overline{\chan R} = \chan R + \chan S,
\)
where $\chan S$ is another completely positive map. For example, one can take $\chan S(\rho)=\tr(\rho-\chan R(\rho))\tau$ for some state $\tau$.
We then have, using the shorthands $\chan N^e \equiv \chan N \otimes \id$ and $\psi \equiv \proj{\psi}$,
\begin{equation}
\begin{split}
F(\overline{\chan R} \chan N, \chan M) &= \min_\psi f[(\overline {\chan R}\chan N)^e(\psi), \chan M^e(\psi)]\\
&= f[(\overline {\chan R}\chan N)^e({\psi_0}), \chan M^e({\psi_0})]\\
&= f[(\chan R\chan N)^e(\psi_0) + (\chan S\chan N)^e({\psi_0}), \chan M^e(\psi_0)]\\
&\ge f[(\chan R\chan N)^e(\psi_0), \chan M^e(\psi_0)]\\
&\ge \min_\psi f[(\chan R\chan N)^e(\psi), \chan M^e(\psi)]\\
&= F(\chan R \chan N, \chan M).\\
\end{split}
\end{equation}
The same argument works for the right-hand side of Eq.~\eqref{fidequ2}.
\qed


\section{Special case $\gcchan M^2 = \gcchan M$}
\label{special-case-section}

Theorem~\ref{main} and Corollary~\ref{main2} might not seem directly useful since they express one optimization in terms of a different but seemingly equally hard one. However, we will show that there are interesting problems, in particular error correction and minimax state discrimination, where one of the optimizations can be given a general and straightforward near-optimal solution. More generally, we will consider the case where \begin{equation}
\gcchan M^2 = \gcchan M.
\end{equation}
One can readily check that this can indeed be satisfied in the special case where $\chan M = \one$, i.e., quantum error correction.

\subsection{Near-optimal bounds}
\label{special-case-bounds}

We will concentrate here on the form of our theorem given in Corollary~\ref{main2}, however, it is straightforward to apply the same reasoning to Theorem~\ref{main}.

In this section, we will replace the worst-case entanglement fidelity $F(\chan N, \chan M)$ by a {\em distance} $d(\chan N, \chan M)$.
From the Bures distance \cite{bures69}, we can define
\(
d_\rho(\chan N, \chan M) = \sqrt{1 - {F_\rho(\chan N, \chan M)}},
\)
which can be used to define
\begin{equation}
d(\chan N, \chan M) := \max_\rho d_\rho(\chan N, \chan M) = \sqrt{1 - {F(\chan N, \chan M)}}
\end{equation}
which satisfies the triangle inequality:
\begin{equation}
\begin{split}
d(\chan N, \chan M) &= \max_\rho d_\rho(\chan N, \chan M) \\
&\le \max_\rho [ d_\rho(\chan N, \chan R) + d_\rho(\chan R, \chan M) ] \\
&\le d(\chan N, \chan R) + d(\chan R, \chan M).
\end{split}
\end{equation}
The equation in Corollary~\ref{main2} in terms of this distance becomes
\begin{equation}
\min_{\chan R} d(\chan R \chan N, \chan M) = \min_{\chan R'} d(\gcchan{N}, \chan R' \gcchan{M}).
\end{equation}


\begin{corollary}
If $\gcchan{N}$ and $\gcchan{M}$ are generalized complementary channels of $\chan N$ and $\chan M$, respectively, and $\gcchan{M}^2 = \gcchan{M}$, then
\begin{equation}
\label{bounds}
\frac{1}{2}d(\gcchan{N}, \gcchan{N} \gcchan{M}) \le \min_{\chan R} d(\chan R \chan N, \chan M) \le d(\gcchan{N}, \gcchan{N} \gcchan{M}).
\end{equation}
\end{corollary}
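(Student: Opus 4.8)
The plan is to use Corollary~\ref{main2} to pass to the dual optimization, and then bracket it using the idempotency of $\gcchan M$, the triangle inequality for $d$, and monotonicity of $d$ under pre-composition. Write $\delta := \min_{\chan R} d(\chan R \chan N, \chan M)$, so that the distance form of Corollary~\ref{main2} gives $\delta = \min_{\chan R'} d(\gcchan N, \chan R' \gcchan M)$, where the minimum is over channels $\chan R'$ from the output space of $\gcchan M$ to the output space of $\gcchan N$. Since $\gcchan M^2 = \gcchan M$ forces $\gcchan M$ to have a common input and output space (namely the common input $\Hil_A$ of $\chan N$ and $\chan M$), the composite $\gcchan N \gcchan M$ is well defined and $\gcchan N$ is itself an admissible choice for $\chan R'$.

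The upper bound is then immediate: substituting the particular choice $\chan R' = \gcchan N$ into the dual minimization gives $\delta \le d(\gcchan N, \gcchan N \gcchan M)$.

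For the lower bound I would take a minimizer $\chan R'_0$ of the dual problem (one exists since the channels of fixed input and output dimension form a compact set and $d$ is continuous), so that $d(\gcchan N, \chan R'_0 \gcchan M) = \delta$. The key move is to pre-compose both $\gcchan N$ and $\chan R'_0 \gcchan M$ with $\gcchan M$. Using that $d$ is non-increasing under pre-composition by a channel, together with the idempotency $\gcchan M \gcchan M = \gcchan M$, one obtains $d(\gcchan N \gcchan M, \chan R'_0 \gcchan M) = d(\gcchan N \gcchan M, \chan R'_0 \gcchan M \gcchan M) \le d(\gcchan N, \chan R'_0 \gcchan M) = \delta$. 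The triangle inequality for $d$ then gives $d(\gcchan N, \gcchan N \gcchan M) \le d(\gcchan N, \chan R'_0 \gcchan M) + d(\chan R'_0 \gcchan M, \gcchan N \gcchan M) \le 2\delta$, which is precisely $\frac{1}{2} d(\gcchan N, \gcchan N \gcchan M) \le \delta$.

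The main obstacle is the one auxiliary fact leaned on above: that $d$ is non-increasing under pre-composition by a channel, i.e. $d(\chan A \chan T, \chan B \chan T) \le d(\chan A, \chan B)$ for any channel $\chan T$ (equivalently, that the worst-case entanglement fidelity satisfies $F(\chan A \chan T, \chan B \chan T) \ge F(\chan A, \chan B)$). The analogous statement for \emph{post}-composition is immediate, since applying a fixed channel to both arguments, tensored with the identity on the reference, can only raise the fidelity by its data-processing property, uniformly in the input state. Pre-composition is more delicate because the reference system is altered by $\chan T$; here I would dilate $\chan T$ by a Stinespring isometry $W$, regard $(W \otimes \id)\ket{\psi_\rho}$ as a purification of $\chan T(\rho)$ with an enlarged reference, and then invoke monotonicity of the fidelity under the partial trace over the dilating environment to show $F_\rho(\chan A \chan T, \chan B \chan T) \ge F_{\chan T(\rho)}(\chan A, \chan B) \ge F(\chan A, \chan B)$. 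Minimizing the left-hand side over $\rho$ delivers the monotonicity, after which the bracket above closes with no further work.
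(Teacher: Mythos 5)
Your proof is correct and follows essentially the same route as the paper's: the upper bound by substituting $\chan R' = \gcchan N$ into the dual problem, and the lower bound by combining the triangle inequality with the idempotency $\gcchan M^2 = \gcchan M$ and the monotonicity $d(\chan A \chan T, \chan B \chan T) \le d(\chan A, \chan B)$ under pre-composition (Eq.~\eqref{rmonot}). The only difference is that you flesh out the justification of that monotonicity via a Stinespring dilation of $\chan T$ and monotonicity of the fidelity under partial trace, whereas the paper merely asserts it with a one-line remark; your version is a welcome, rigorous form of the same observation.
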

\proof
The rightmost inequality follows from picking the suboptimal $\chan R' = \gcchan{N}$. For the leftmost inequality, suppose that $\chan R'_0$ minimizes $d(\gcchan{N}, \chan R' \gcchan{M})$. Then, using the triangle inequality,
\begin{equation}
d(\gcchan{N}, \gcchan{N} \gcchan{M}) \le d(\gcchan{N}, \chan R'_0 \gcchan{M}) + d(\chan R'_0 \gcchan{M}, \gcchan{N} \gcchan{M}).
\end{equation}
Let
\(
\varepsilon_0 := \min_{\chan R} d(\chan R \chan N, \chan M).
\)
We know that the first term is equal to $\varepsilon_0$ since $\chan R'_0$ is optimal. For the second term, note that
\begin{equation}
 d(\chan R'_0 \gcchan{M}, \gcchan{N} \gcchan{M}) = d(\chan R'_0 \gcchan{M}^2, \gcchan{N} \gcchan{M}) \le d(\chan R'_0\gcchan{M}, \gcchan{N})  = \varepsilon_0.
\end{equation}
For the last inequality, we used the fact that
\begin{equation}
\label{rmonot}
d(\chan N \chan R,\chan M \chan R) \le d(\chan N, \chan M)
\end{equation}
for any channels $\chan N$, $\chan M$, and $\chan R$. This property follows from the fact that $\chan R$ simply limits the number of input states over which the maximum is taken inside the definition of the distance.
It follows that $d(\gcchan{N}, \gcchan{N} \gcchan{M}) \le 2\varepsilon_0$.
\qed

Note that computing $d(\cchan N, \cchan N \cchan M)$ requires a convex maximization over inputs only \cite{gilchrist05}, which is a significant simplification over the minimax $\min_{\chan R} d(\chan R\chan N,\chan M)$.

\subsection{Near-optimal recovery channels}
\label{special-case-channel}

Let us show how we can construct a recovery channel $\good{\chan R}$ which performs as well as guaranteed by our bounds (Eq.~\eqref{bounds}), i.e.,
\begin{equation}
\label{goodchan}
d(\good{\chan R} \chan N, \chan M) \le d(\gcchan{N}, \gcchan{N}\gcchan{M}).
\end{equation}

If we take $\gcchan{M}=\cchan M$ to be complementary to $\chan M$ and $\gcchan{N}=\cchan N$ to be complementary to $\chan N$, then the fidelity $F(\cchan N, \cchan N\cchan M) = 1 - d(\cchan N, \cchan N\cchan M)^2$ is given through a minimization and a maximization as
\begin{equation}
F(\cchan N, \cchan N\cchan M) = \min_\rho \max_{\|A\| \le 1} \re\, g_{\rho}(A,U'),
\end{equation}
where the bilinear function $g_\rho$ is defined in Eq.~\eqref{maindiag}, and $U'$ yields $\cchan N$ through $\cchan N(\rho) = \tr_2[U'(\rho \otimes \proj{0}) (U')^\dagger]$.

The minimax theorem guarantees that there exists a {\em saddle point}, i.e., a pair $(\rho_0,A_0)$ such that we have both
\begin{equation}
\re\, g_{\rho_0}(A_0,U') = \min_\rho \max_{\|A\| \le 1} \re\, g_{\rho}(A,U')
\end{equation}
and
\begin{equation}
\re\, g_{\rho_0}(A_0,U') =  \max_{\|A\| \le 1} \min_\rho \re\, g_{\rho}(A,U').
\end{equation}
If we know this saddle point, then, defining the trace-nonincreasing completely positive map
\begin{equation}
\chan S(\rho) := \tr_2(A_0(\rho \otimes \proj{0}) A_0^\dagger)
\end{equation}
and completing it to the trace-preserving channel
\begin{equation}
\good{\chan R}(\rho) = \chan S(\rho) + \tr(\rho - \chan S(\rho)) \tau
\end{equation}
for some state $\tau$,
we have
\begin{equation}
\begin{split}
F(\good{\chan R} \chan N, \chan M) &\ge F(\chan S \chan N, \chan M)\\
&= \min_\rho \max_{U'} \re\, g_{\rho}(A_0,U')\\
&= \max_{\|A'\|\le 1} \min_\rho \re\, g_{\rho}(A_0,A')\\
&\ge \min_\rho \re\, g_{\rho}(A_0,U')\\
&= \re\, g_{\rho_0}(A_0,U')\\
&= F(\cchan N, \cchan N\cchan M),\\
\end{split}
\end{equation}
i.e., $\good{\chan R}$ satisfies Eq.~\eqref{goodchan}.
Hence, a near-optimal correction channel $\good{\chan R}$ is given by the saddle point in the minimax problem yielding the estimate $F(\cchan N, \cchan N \cchan M)$.
For completeness, suppose that instead of $\cchan M$ complementary to $\chan M$, we use a generalized complementary channel $\gcchan{M} \sim \cchan M$. Let $\chan M'$ be complementary to $\gcchan{M}$. Using Theorem~\ref{fund} we obtain $\chan M' \sim \chan M$. As above, we can build $\good{\chan R}'$ such that
\begin{equation}
d(\good{\chan R}' \chan N, \chan M') \le d(\cchan N, \cchan N\gcchan{M}).
\end{equation}
Suppose that $\chan T'$ is such that $\chan M = \chan T' \chan M'$. Then using $\good{\chan R} := \chan T' \good{\chan R}'$, we obtain
\begin{equation}
d(\good{\chan R} \chan N, \chan M) \le d(\good{\chan R}' \chan N, \chan M') \le d(\cchan N, \cchan N\gcchan{M}).
\end{equation}
If furthermore $\gcchan{N} \sim \cchan N$, we have
\begin{equation}
d(\good{\chan R} \chan N, \chan M) \le d(\cchan N, \cchan N\gcchan{M}) \le d(\gcchan{N}, \gcchan{N} \gcchan{M})
\end{equation}
by the monotonicity of the distance.

Let us now focus on the problem of finding the saddle point in the case where
\begin{equation}
\chan M(\rho) = \rho \otimes \sigma
\end{equation}
and we use the complementary channel
\begin{equation}
\cchan M(\rho) = \sigma \tr(\rho).
\end{equation}
We use a channel $\chan M$ slightly more general than for pure quantum error correction so that we can use for $\cchan M$ the most general channel similar to a channel complementary to the identity. This means that to simulate the identity rather than this channel $\chan M$, we can just use the near-optimal channel $\good{\chan R}$ that we will obtain and trace out the extra state $\sigma$.

We also write
\begin{equation}
\label{encnoise}
\chan N(\rho) = \sum_i E_i V \rho V^\dagger E_i^\dagger
\end{equation}
as in Section~\ref{qec}.

Assuming $\sigma = \sum_j p_j \proj{j}$, and writing a purification of $\sigma$ as $\ket{\psi} = \sum_i \sqrt {p_i} \ket{i}_A \otimes \ket{i}_R$, we define the operator
\begin{equation}
\begin{split}
X_\rho &:=
\xy (0,0)*{ \xy (2,-2)*{\psi};(4,-6)*{};(4,2)*{}**\crv{(-2,-6)&(-2,2)};(4,-6)*{};(4,2)*{}**\dir{-};(4,0)*{};(8,0)*{};**\dir{-};(8,-2)*{};(16,-2)*{}**\dir{-};(16,-2)*{};(16,6)*{}**\dir{-};(16,6)*{};(8,6)*{}**\dir{-};(8,6)*{};(8,-2)*{}**\dir{-};(12,2)*{V_{\mathcal N}};(16,4)*{};(20,4)*{};**\dir{-};{\ar(0,8)*{};(11,8)*{}};(11,8)*{};(20,8)*{};**\dir{-};(20,2)*{};(28,2)*{}**\dir{-};(28,2)*{};(28,10)*{}**\dir{-};(28,10)*{};(20,10)*{}**\dir{-};(20,10)*{};(20,2)*{}**\dir{-};(24,6)*{V^\dagger_{\mathcal N}};(28,8)*{};(32,8)*{};**\dir{-};(32,6)*{};(36,6)*{}**\dir{-};(36,6)*{};(36,10)*{}**\dir{-};(36,10)*{};(32,10)*{}**\dir{-};(32,10)*{};(32,6)*{}**\dir{-};(34,8)*{\rho};(36,8)*{};(40,8)*{};**\dir{-};(38,10)*{\scriptstyle A};{\ar(16,0)*{};(29,0)*{}};(29,0)*{};(40,0)*{};**\dir{-};(28,-2)*{\scriptstyle B};{\ar(4,-4)*{};(23,-4)*{}};(23,-4)*{};(40,-4)*{};**\dir{-};(22,-6)*{\scriptstyle R}; \endxy }; \endxy \\
&= \sum_i
\xy (0,0)*{ \xy (2,-2)*{\psi};(4,-6)*{};(4,2)*{}**\crv{(-2,-6)&(-2,2)};(4,-6)*{};(4,2)*{}**\dir{-};(4,0)*{};(8,0)*{};**\dir{-};(8,-2)*{};(16,-2)*{}**\dir{-};(16,-2)*{};(16,6)*{}**\dir{-};(16,6)*{};(8,6)*{}**\dir{-};(8,6)*{};(8,-2)*{}**\dir{-};(12,2)*{V_{\mathcal N}};(16,4)*{};(20,4)*{};**\dir{-};(16,4)*{};(20,4)*{};**\dir{-};(21.6,4)*{\scriptstyle i};(20,2)*{};(20,6)*{}**\crv{(24,2) & (24,6)};(20,2)*{};(20,6)*{}**\dir{-};(26.4,4)*{\scriptstyle i};(28,2)*{};(28,6)*{}**\crv{(24,2)&(24,6)};(28,2)*{};(28,6)*{}**\dir{-};(28,4)*{};(32,4)*{};**\dir{-};{\ar(0,8)*{};(17,8)*{}};(17,8)*{};(32,8)*{};**\dir{-};(32,2)*{};(40,2)*{}**\dir{-};(40,2)*{};(40,10)*{}**\dir{-};(40,10)*{};(32,10)*{}**\dir{-};(32,10)*{};(32,2)*{}**\dir{-};(36,6)*{V^\dagger_{\mathcal N}};(40,8)*{};(44,8)*{};**\dir{-};(44,6)*{};(48,6)*{}**\dir{-};(48,6)*{};(48,10)*{}**\dir{-};(48,10)*{};(44,10)*{}**\dir{-};(44,10)*{};(44,6)*{}**\dir{-};(46,8)*{\rho};(48,8)*{};(52,8)*{};**\dir{-};(50,10)*{\scriptstyle A};{\ar(16,0)*{};(35,0)*{}};(35,0)*{};(52,0)*{};**\dir{-};(34,-2)*{\scriptstyle B};{\ar(4,-4)*{};(29,-4)*{}};(29,-4)*{};(52,-4)*{};**\dir{-};(28,-6)*{\scriptstyle R}; \endxy }; \endxy \\
&= \sum_i \rho V^\dagger E_i^\dagger \otimes (E_i V \otimes \one )\ket{\psi}\\
&= \sum_{ij} \rho V^\dagger E_i^\dagger \otimes \sqrt {p_j}   E_iV  \ket j \otimes \ket j.
\end{split}
\end{equation}
(We note that the operator $X_\rho$ defined here is different from the one denoted by the same symbol in Ref.~\cite{beny10}.)
One can check that
\begin{equation}
g_\rho(A,U') = \tr[A (X_\rho^\dagger \otimes \ket 0)],
\end{equation}
where the state $\ket{0}$ is the state used before to relate $A$ to a completely positive map $\chan S$. Since this state is arbitrary, we can just absorb it in the definition of a new operator $A$, which is now not given by a square matrix anymore, and we write simply
\begin{equation}
g_\rho(A,U') = \tr(A X_\rho^\dagger),
\end{equation}
where the completely positive map $\chan S$ is obtained from $A$ by
\begin{equation}
\chan S(\rho) := \tr_B(A \rho A^\dagger).
\end{equation}
In terms of $X_\rho$, we then have
\begin{gather}
F(\cchan N, \cchan N \cchan M) = \min_\rho \max_{\|A\|\le 1} \re \tr(A X_\rho^\dagger) =  \min_\rho \tr |X_\rho|,\label{intermsofX}
\end{gather}
where
\begin{equation}
|X_\rho| := \sqrt{X_\rho^\dagger X_\rho}.
\end{equation}
The second equality in Eq.~\eqref{intermsofX} uses the fact that $\max_{\|A\|\le 1} \re \tr(A X_\rho^\dagger) =  \tr |X_\rho|$, which can be seen as follows. Let $X_\rho=V|X_\rho|$ be a polar decomposition of $X_\rho$. Then $\re \tr(A X_\rho^\dagger)\equiv  \re \tr(A |X_\rho|V^\dagger)=\re\tr(V^\dagger A |X_\rho|)$. Since $\|V^\dagger A \|\leq \|V^\dagger\|\| A \|\leq 1$ and $|X_\rho|\geq 0$, we have (see Appendix \ref{firstapp}) $\re\tr(V^\dagger A |X_\rho|)\leq \tr|X_\rho|$. This bound is achievable for $A=V$.

The following result shows that under certain circumstances one can use as a saddle point the state $\rho_0$ minimizing $\tr |X_\rho|$ together with a unitary operator $A_0$ coming from the polar decomposition of $X_{\rho_0}$,
\begin{equation}
X_{\rho_0}=A_0 |X_{\rho_0}|.
\end{equation}
To express $\chan S$, it will be convenient to introduce the completely positive map $\Phi_\rho$ defined by
\begin{equation}
\label{thatmap}
\begin{split}
\Phi_\rho^\dagger(\tau) &= \tr_{BR}( X_\rho \tau X_\rho^\dagger )\\
&= \sum_{ij} \tr(\sigma V^\dagger E_i^\dagger E_j V) \rho V^\dagger E_j^\dagger \tau E_i V \rho.
\end{split}
\end{equation}
Note also that
\begin{equation}
X_\rho^\dagger X_\rho = \Phi_\rho(\one).
\end{equation}

\begin{proposition}
Let $\chan N$ be defined as in Eq.~\eqref{encnoise} and $\Phi$ as in Eq.~\eqref{thatmap}. Let $\rho_0$ be a state minimizing
\begin{equation}
F_0(\rho) := \tr\sqrt{\Phi_\rho(\one)}.
\end{equation}
We have
\begin{equation}
F_0(\rho_0)  \le \max_{\chan R} F(\chan R \chan N, \id) \le \frac 1 4 F_0(\rho_0) + \frac 3 4.
\end{equation}
If furthermore the optimal state $\rho_0$ is {\em unique} and of {\em full rank}, then any channel of the form
\begin{equation}
\label{goodchannel}
\good{\chan R}(\tau) = \Phi_{\rho_0}^\dagger[\Phi_{\rho_0}(\one)^{-\frac 1 2} \tau \Phi_{\rho_0}(\one)^{-\frac 1 2}] + \chan T(\tau),
\end{equation}
for some completely positive map $\chan T$,
is near-optimal in the sense that it satisfies
\begin{equation}
F_0(\rho_0) \le F(\good{\chan R} \chan N, \id).
\end{equation}
\end{proposition}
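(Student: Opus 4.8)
The plan is to prove the two claims in turn, reducing each to tools already in place. First I would record two structural facts. Since $\chan M(\rho)=\rho\otimes\sigma$, tracing out the $\sigma$ factor gives $\chan M\lnt\id$ while appending it gives $\id\lnt\chan M$, so $\chan M\sim\id$; and since $\cchan M(\rho)=\sigma\tr(\rho)$ with $\tr\sigma=1$, one has $\cchan M^2=\cchan M$, so the hypotheses behind Eq.~\eqref{bounds} are met with $\gcchan M=\cchan M$ and $\gcchan N=\cchan N$. Moreover, combining $X_\rho^\dagger X_\rho=\Phi_\rho(\one)$ with Eq.~\eqref{intermsofX} identifies $F(\cchan N,\cchan N\cchan M)=\min_\rho\tr|X_\rho|=\min_\rho\tr\sqrt{\Phi_\rho(\one)}=F_0(\rho_0)$.

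For the two-sided estimate I would invoke Proposition~\ref{robustprop} to replace the target, $\max_{\chan R}F(\chan R\chan N,\id)=\max_{\chan R}F(\chan R\chan N,\chan M)$, and then push the distance bounds of Eq.~\eqref{bounds} through the relation $d(\cdot,\cdot)^2=1-F(\cdot,\cdot)$. Writing $F^*=\max_{\chan R}F(\chan R\chan N,\chan M)$ and $G=F_0(\rho_0)=F(\cchan N,\cchan N\cchan M)$, the upper bound $\sqrt{1-F^*}\le\sqrt{1-G}$ rearranges to $F^*\ge F_0(\rho_0)$, while the lower bound $\frac{1}{2}\sqrt{1-G}\le\sqrt{1-F^*}$ rearranges to $1-F^*\ge\frac{1}{4}(1-G)$, i.e.\ $F^*\le\frac{1}{4}F_0(\rho_0)+\frac{3}{4}$. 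Together these are exactly the displayed inequalities.

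For the recovery channel I would follow the saddle-point recipe of Section~\ref{special-case-channel}: any saddle point $(\rho_0,A_0)$ of $\min_\rho\max_{\|A\|\le1}\re\tr(AX_\rho^\dagger)$ produces, through $\chan S(\tau)=\tr_B(A_0\tau A_0^\dagger)$ completed to a channel, a simulator of $\chan M$ with worst-case fidelity at least $F(\cchan N,\cchan N\cchan M)=F_0(\rho_0)$. So the task is to exhibit a saddle point whose first component is $\rho_0$ and whose second is the polar isometry $A_0$ of $X_{\rho_0}=A_0|X_{\rho_0}|$. Shiffman's theorem~\cite{grossinho01} (already used for Theorem~\ref{main}) guarantees some saddle $(\rho_*,A_*)$; its first component minimizes $\max_A\re\tr(AX_\rho^\dagger)=\tr|X_\rho|=F_0(\rho)$, so uniqueness of $\rho_0$ forces $\rho_*=\rho_0$, and its second component maximizes $\re\tr(AX_{\rho_0}^\dagger)$, so on the support of $|X_{\rho_0}|$ it coincides with the polar isometry $A_0=X_{\rho_0}\Phi_{\rho_0}(\one)^{-1/2}$. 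Substituting this $A_0$ into $\chan S$ and additionally tracing out the reference $R$ that purifies $\sigma$ — which converts a simulator of $\chan M$ into one of $\id$, since $\tr_R\chan M=\id$ and fidelity cannot decrease under the channel $\id\otimes\tr_R$ — gives, using $\Phi_\rho^\dagger(\cdot)=\tr_{BR}(X_\rho\cdot X_\rho^\dagger)$ from Eq.~\eqref{thatmap}, the map $\tr_{BR}(A_0\tau A_0^\dagger)=\Phi_{\rho_0}^\dagger[\Phi_{\rho_0}(\one)^{-1/2}\tau\Phi_{\rho_0}(\one)^{-1/2}]$, which is the first term of Eq.~\eqref{goodchannel}. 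The completion of this trace-nonincreasing map to a channel is the free completely positive $\chan T$, and adding it can only raise the fidelity by the strong-concavity argument in the proof of Corollary~\ref{main2}; hence every channel of the form Eq.~\eqref{goodchannel} satisfies $F(\good{\chan R}\chan N,\id)\ge F_0(\rho_0)$.

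The hard part is pinning down the saddle point, and in particular accounting for the role of the full-rank hypothesis. The operator $\Phi_{\rho_0}(\one)=X_{\rho_0}^\dagger X_{\rho_0}$ is generally not invertible on the whole physical space — its support is the range of $\chan N$ — so the polar isometry $A_0$ is only fixed there and $\Phi_{\rho_0}(\one)^{-1/2}$ must be read as a pseudo-inverse. I would need to verify that full rank of $\rho_0$ makes this support exactly the range of $\chan N$, so that $\Phi_{\rho_0}(\one)^{-1/2}$ acts faithfully on every output state $\chan N(\rho)$, and that the residual freedom of $A_*$ off this support is precisely the freedom encoded in $\chan T$ and never affects $\chan S\chan N$. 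Once the saddle point is fixed, the diagrammatic bookkeeping relating $\tr_B(A_0\cdot A_0^\dagger)$ to $\Phi_{\rho_0}^\dagger$ is routine.
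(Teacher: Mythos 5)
Your proposal follows the paper's own route essentially step for step: the two-sided bound is Eq.~\eqref{bounds} translated from distance to fidelity via the identification $F(\cchan N,\cchan N\cchan M)=\min_\rho\tr|X_\rho|=\min_\rho\tr\sqrt{\Phi_\rho(\one)}=F_0(\rho_0)$, and the recovery channel is obtained from the same saddle-point argument (existence from the minimax theorem, uniqueness of $\rho_0$ pinning the first component, full rank making the choice of polar maximizer immaterial). The one step you defer is exactly what the paper's Appendix~\ref{obvious} supplies, and you should aim at the slightly weaker facts that are actually true and sufficient: full rank of $\rho_0$ gives the inclusion $\mathrm{supp}\,|X_\rho|\subseteq\mathrm{supp}\,|X_{\rho_0}|$ for every $\rho$ (that support is spanned by the ranges of the Kraus operators of $\chan N$ with nonvanishing weight $\lambda_i$, not necessarily the full range of $\chan N$), and the residual freedom in the maximizer $A_0$ leaves the objective $\re\tr(A_0X_\rho^\dagger)$ unchanged for all $\rho$ --- though not necessarily the map $\chan S\chan N$ itself --- which is all that is needed for every such $(\rho_0,A_0)$ to be a saddle point and hence for every channel of the form Eq.~\eqref{goodchannel} to be near-optimal.
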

\proof
The inequalities involving $F_0(\rho_0)$ follow directly from Proposition~\ref{bounds}.
In view of the previous discussion, we only need to prove that the optimal state $\rho_0$ together with an operator $A_0$ solving $X_{\rho_0}=A_0 |X_{\rho_0}|$ form a saddle point for $\re \tr(A X^\dagger_\rho)$. It is clear from the definitions that this pair $(A_0,\rho_0)$ attains the value $\min_\rho \max_{\|A\|=1} \re \tr(A X^\dagger_\rho)$. But we do not a priori know which solution $A_0$ of the equation $X_{\rho_0}=A |X_{\rho_0}|$ is such that $\rho_0$ is the minimum for $\re \tr(A_0 X^\dagger_\rho)$, which is what is needed for $(A_0,\rho_0)$ to be a saddle point. We show in Appendix \ref{obvious} that if $\rho_0$ has full rank, then for any state $\rho'$, $\re \tr(A_0 X_{\rho'})$ is independent of which solution $A_0$ of $X_{\rho_0}=A |X_{\rho_0}|$ we choose. Hence, any of them yields a near-optimal correction channel.
\qed

Note that the condition that $\rho_0$ be full-rank is necessary for the above construction to work. A counterexample to this construction for the case when $\rho_0$ is not of full rank is presented in Appendix \ref{counter}.

\subsection{Nature of the approximately correctable channels}
\label{nature}

It would seem natural that an approximately correctable channel is also close to some exactly correctable channel. Here we will show a slight variation of this intuition in terms of the worst-case entanglement fidelity, namely that an approximately correctable channel is always {\em similar} to one which is close to an exactly correctable channel.

\begin{theorem}
\label{iso1}
For any channel $\chan N$, there exists an exactly correctable channel $\chan N_0$ and a channel $\chan N' \sim \chan N$ such that
\begin{equation}
d(\chan N', \chan N_0) = \min_{\chan R} d(\chan R \chan N, \id).
\end{equation}
\end{theorem}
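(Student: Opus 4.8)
The plan is to prove the two inequalities $d(\chan N',\chan N_0)\ge\varepsilon_0$ and $d(\chan N',\chan N_0)\le\varepsilon_0$ separately, where $\varepsilon_0:=\min_{\chan R}d(\chan R\chan N,\id)$. First I would pass to the dual picture: applying Corollary~\ref{main2} with $\chan M=\id$ in its distance form, and recalling from Section~\ref{qec} that the generalized complementary channels of $\id$ are exactly the constant channels $\chan S_\sigma(\rho):=\sigma\,\tr(\rho)$, one has
\begin{equation}
\varepsilon_0=\min_{\chan R'}d(\cchan N,\chan R'\chan S_\tau)=\min_\sigma d(\cchan N,\chan S_\sigma),
\end{equation}
since $\chan R'\chan S_\tau(\rho)=\chan R'(\tau)\tr(\rho)$ ranges over all $\chan S_\sigma$ as $\chan R'$ varies. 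By compactness the minimum is attained at some state $\sigma_0$; write $F_0:=F(\cchan N,\chan S_{\sigma_0})=1-\varepsilon_0^2$. The target channels will be built from this optimal $\sigma_0$.

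The lower bound comes for free and holds for \emph{any} $\chan N'\sim\chan N$ and any exactly correctable $\chan N_0$. Indeed, if $\chan R_0\chan N_0=\id$, then monotonicity of $d$ under postcomposition (the fidelity only increases under the channel $\chan R_0\otimes\id$) together with the invariance of the optimal distance under $\sim$ (Proposition~\ref{robustprop}) gives
\begin{equation}
d(\chan N',\chan N_0)\ge d(\chan R_0\chan N',\id)\ge\min_{\chan R}d(\chan R\chan N',\id)=\varepsilon_0.
\end{equation}
So the whole content is the \emph{construction} of a pair meeting the upper bound.

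For the construction I would fix a dilation $V_{\chan N}$ of $\chan N$, with output space $B$ and environment $C$ so that $\cchan N$ acts into $C$, and a dilation $V_0$ of $\chan S_{\sigma_0}$ (with environment $E$) whose complementary channel is the exactly correctable ``append an ancilla'' map $\chan N_0^{(0)}(\rho)=\rho\otimes\sigma_0\sim\id$. For a purification $\ket{\psi_\rho}$ of $\rho$, the vectors $\ket{\chi_\rho}:=(V_{\chan N}\otimes\one)\ket{\psi_\rho}$ and $\ket{\xi_\rho}:=(V_0\otimes\one)\ket{\psi_\rho}$ purify both dual pairs at once: they purify $(\cchan N,\chan S_{\sigma_0})$ when $B$ and $E$ play the role of purifier, and $(\chan N,\chan N_0^{(0)})$ when the common system $C$ does. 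Writing the fidelity via Eq.~\eqref{fidelity},
\begin{equation}
F_0=\min_\rho\max_{\|A\|\le1}\re\langle\chi_\rho|(\one\otimes A)|\xi_\rho\rangle,
\end{equation}
with $A$ a contraction from $E$ to $B$. Expanding the overlap as $\langle\chi_\rho|(\one\otimes A)|\xi_\rho\rangle=\tr(\rho\,V_{\chan N}^\dagger(\one\otimes A)V_0)$ exhibits it as bilinear in $(\rho,A)$ over convex compact sets, so Shiffman's minimax theorem (as in the proof of Theorem~\ref{main}) produces a saddle point; in particular there is a \emph{single} contraction $A_0$ with $\re\langle\chi_\rho|(\one\otimes A_0)|\xi_\rho\rangle\ge F_0$ for \emph{all} $\rho$.

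Finally I would dilate $A_0$ to an isometry $\hat A_0$ obeying $(\one\otimes\bra{f_0})\hat A_0=A_0$, and set $\chan N':=\chan N(\cdot)\otimes\proj{f_0}$, which is manifestly $\sim\chan N$, and $\chan N_0:=\hat A_0\,\chan N_0^{(0)}(\cdot)\,\hat A_0^\dagger$, which is the complementary channel of $\chan S_{\sigma_0}$ for the dilation absorbing $\hat A_0$ and is therefore still $\sim\id$ and exactly correctable, with the same output space as $\chan N'$. The vectors $\ket{\chi_\rho}\otimes\ket{f_0}$ and $(\one\otimes\hat A_0)\ket{\xi_\rho}$ purify $(\chan N'\otimes\id)(\proj{\psi_\rho})$ and $(\chan N_0\otimes\id)(\proj{\psi_\rho})$ with the \emph{same} purifier $C$, so choosing the identity on $C$ as a (generally suboptimal) Uhlmann operator gives
\begin{equation}
F_\rho(\chan N',\chan N_0)\ge\re\langle\chi_\rho|(\one\otimes A_0)|\xi_\rho\rangle\ge F_0
\end{equation}
for every $\rho$. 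Hence $d(\chan N',\chan N_0)\le\varepsilon_0$, and with the lower bound this forces equality. The main obstacle is precisely this uniformity: a naive argument would use, for each $\rho$, the $\rho$-dependent optimal Uhlmann operator, and the real work is to replace it by one $\rho$-independent operator $A_0$---which is exactly what the minimax/saddle-point step delivers, at the cost of enlarging $\chan N$ to the equivalent channel $\chan N'$ by the fixed ancilla $\ket{f_0}$.
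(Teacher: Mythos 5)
Your proof is correct and follows essentially the same route as the paper: the lower bound via monotonicity plus Proposition~\ref{robustprop}, and the upper bound by finding the optimal constant channel in the dual picture, extracting a single $\rho$-independent contraction, dilating it to an isometry, and absorbing the dilation into $\chan N' = \chan N(\cdot)\otimes\proj{f_0}$. The only cosmetic difference is that you re-derive the uniform contraction $A_0$ by an explicit minimax/saddle-point argument, whereas the paper obtains the same object by invoking Theorem~\ref{main} with the Kraus-number constraint $|\chan R|=1$.
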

\proof
Let
\begin{equation}
\varepsilon := \min_{\chan R} d(\chan R \chan N, \id).
\end{equation}
To prove that $d(\chan N', \chan N_0) \ge \varepsilon$, consider the channel $\chan R_0$ correcting $\chan N_0$. Using first the monotonicity of the distance and then Corollary~\ref{robustprop}, we conclude
\begin{equation}
d(\chan N', \chan N_0) \ge d(\chan R_0 \chan N', \id) \ge \min_{\chan R} d(\chan R \chan N', \id) = \varepsilon.
\end{equation}

For the converse, observe that by Corollary~\ref{main2} there exists a constant channel $\chan C$,
\begin{equation}
\chan C(\rho) = \sigma \tr(\rho),
\end{equation}
such that
\begin{equation}
d(\cchan N, \chan C) = \varepsilon.
\end{equation}
Using Theorem~\ref{main} with $d=1$,
\begin{equation}
\varepsilon = d(\cchan N, \chan C) \ge \min_{\chan |R|=1} d(\chan R \cchan N, \chan C) = \min_{\chan |R|=1} d(\chan N, \chan R \cchan C).
\end{equation}
Let $R_0$ be the optimal trace-nonincreasing CP map on the right-hand side. It is of the form $\chan R_0(\rho) = A \rho A^\dagger$ where $A^\dagger A \le \one$.
Consider the isometry
\begin{equation}
V := A \otimes \ket 0 + \sqrt{\one - A^\dagger A} \otimes \ket 1
\end{equation}
with corresponding map $\chan V(\rho) := V \rho V^\dagger$.
We have
\begin{equation}
\label{interm}
\varepsilon \ge d(\chan N, \chan R_0 \cchan C) = d(\chan N', \chan V \cchan C)
\end{equation}
where
\begin{equation}
\chan N'(\rho) := \chan N(\rho) \otimes \proj 0.
\end{equation}
Clearly $\chan N' \sim \chan N$.
In addition, we know from the Knill-Laflamme conditions (see Section~\ref{qec}) that $\cchan C$ is exactly correctable. Therefore $\chan N_0 = \chan V \cchan C$ is also exactly correctable.
\qed

Let us find a channel $\cchan C$ complementary to $\chan C$ explicitly. If $\sigma = \sum_i p_i \proj{i}$, we have
\begin{equation}
\chan C(\rho) = \sum_{ij} p_i \ketbra i j \rho \ketbra j i = \tr_E W \rho W^\dagger,
\end{equation}
where
\begin{equation}
W = \sum_{ij} \sqrt{p_i} \ketbra i j \otimes \ket{ij}_E.
\end{equation}
The corresponding complementary channel is
\begin{equation}
\cchan C(\rho) = \tr_B W \rho W^\dagger = \sum_{ijj'} p_i \bra{j}\rho\ket{j'} \otimes \ket{ij}\bra{ij'} = \rho \otimes \sigma.
\end{equation}
Hence we see that the exactly correctable channel $\chan N_0$ in the above proof has the form
\begin{equation}
\chan N_0(\rho) = V (\rho \otimes \sigma) V^\dagger,
\end{equation}
which is indeed the general form of a correctable channel~\cite{nayak06}.

\subsection{State discrimination}
\label{discrimination}

To illustrate the generality of our result, let us show how it yields a nontrivial result for minimax state discrimination \cite{DAriano05}.

We want to specialize the relation
\begin{equation}
\frac{1}{2} d(\cchan N, \cchan N \cchan M) \le \min_{\chan R} d(\chan R \chan N, \chan M) \le d(\cchan N, \cchan N \cchan M),
\end{equation}
where $d(\chan N, \chan M) = \sqrt{1-F(\chan N,\chan M)}$ and $F(\chan N,\chan M)$ is the worst-case entanglement fidelity between channels $\chan N$ and $\chan M$,
to the case where
\begin{equation}
\chan M(\rho) = \sum_i \proj{i} \rho \proj{i}
\end{equation}
and
\begin{equation}
\chan N(\rho) = \sum_i \rho_i \bra{i} \rho \ket{i} = \sum_{ij} s_i^\dagger \ket{j}\bra{i} \rho \ket{i}\bra{j} s_i
\end{equation}
for a fixed set of states $\{\rho_i\}$, where $\rho_i = s_i^\dagger s_i$. This is the problem of minimax state discrimination \cite{DAriano05}. Indeed, this channel $\chan N$ can be thought of as a classical-to-quantum channel, which is just what a state preparation is: it outputs a quantum state depending on some classical data, namely, the choice of which $\rho_i$ to output. 

Since the output of $\chan M$ is diagonal in the basis $\ket{i}$, we also expect the output of the optimal channel $\chan R$ to be, in which case there is a positive operator-valued measure (POVM) with elements $A_i$ ($\sum_iA_i=\one$) such that
\begin{equation}
\chan R(\rho) = \sum_i \tr(\rho A_i) \proj{i}.
\end{equation}
We then have
\begin{equation}
\max_{\chan R} F(\chan R \chan N, \chan M) = \max_A \min_p \sum_i \sqrt{p_i} \sqrt{ \sum_j p_j  \tr( \rho_j A_i ) },
\end{equation}
where $p_i = \bra{i} \rho \ket{i}$ is all that matters about the initial state $\rho$.
Since the classical fidelity is jointly concave in both arguments, the minimum of $p$ is achieved when $p$ is pure. Hence,
\begin{equation}
\max_{\chan R} F(\chan R \chan N, \chan M) = \max_A \min_i \sqrt{ \tr( \rho_i A_i ) }.
\end{equation}
The square of this expression is the minimal worst-case success probability for the discrimination of the states $\rho_i$.

In order to see what the bound $F(\cchan N, \cchan N \cchan M)$ is, we need channels complementary to $\chan N$ and $\chan M$, or the corresponding isometries $V$ and $W$.
If
\begin{equation}
W = \sum_i \ket{i}_E \otimes \ket{i}_B \bra{i},
\end{equation}
then clearly $\chan M(\rho) = \tr_E W \rho W^\dagger$. Hence, we can pick $\cchan M := \tr_B W \rho W^\dagger$.
Similarly, if
\begin{equation}
V = \sum_{ij}  \ket{ij}_E \otimes s_i^\dagger \ket{j}_B\bra{i},
\end{equation}
then $\chan N(\rho) = \tr_E V \rho V^\dagger$, and so we can use $\cchan N := \tr_B V \rho V^\dagger$.
The worst-case entanglement fidelity $F(\cchan N, \cchan N \cchan M)$ then can be written as
\begin{equation}
F(\cchan N, \cchan N \cchan M) = \min_\psi \max_U |f(U,\rho)|,
\end{equation}
where
\begin{equation}
\begin{split}
f(U,\rho) &=
\xy(0,0)*+{\xy(-1.8,6)*{\psi};(0,1.5)*{};(0,10.5)*{}**\crv{(-6,1.5)&(-6,10.5)};(0,1.5)*{};(0,10.5)*{}**\dir{-};(0,3)*{};(3,3)*{}**\dir{-};(3,-4.5)*{};(12,-4.5)*{}**\dir{-};(12,-4.5)*{};(12,4.5)*{}**\dir{-};(12,4.5)*{};(3,4.5)*{}**\dir{-};(3,4.5)*{};(3,-4.5)*{}**\dir{-};(7.5,0)*{V};(15.3,0)*{\scriptstyle 0};(16.5,-1.5)*{};(16.5,1.5)*{}**\crv{(13.5,-1.5)&(13.5,1.5)};(16.5,-1.5)*{};(16.5,1.5)*{}**\dir{-};(16.5,0)*{};(19.5,0)*{}**\dir{-};(19.5,3)*{};(16.75,3)*{}**\dir{-};{\ar(12,3)*{};(16.75,3)*{}};(19.5,-1.5)*{};(25.5,-1.5)*{}**\dir{-};(25.5,-1.5)*{};(25.5,4.5)*{}**\dir{-};(25.5,4.5)*{};(19.5,4.5)*{}**\dir{-};(19.5,4.5)*{};(19.5,-1.5)*{}**\dir{-};(22.5,1.5)*{U};(28.5,-3)*{};(21.25,-3)*{}**\dir{-};{\ar(12,-3)*{};(21.25,-3)*{}};(25.5,0)*{};(28.5,0)*{}**\dir{-};(28.5,-4.5)*{};(34.5,-4.5)*{}**\dir{-};(34.5,-4.5)*{};(34.5,1.5)*{}**\dir{-};(34.5,1.5)*{};(28.5,1.5)*{}**\dir{-};(28.5,1.5)*{};(28.5,-4.5)*{}**\dir{-};(31.5,-1.5)*{V^\dagger};(34.5,0)*{};(37.5,0)*{}**\dir{-};(37.5,3)*{};(32.5,3)*{}**\dir{-};{\ar(25.5,3)*{};(32.5,3)*{}};(37.5,-1.5)*{};(43.5,-1.5)*{}**\dir{-};(43.5,-1.5)*{};(43.5,4.5)*{}**\dir{-};(43.5,4.5)*{};(37.5,4.5)*{}**\dir{-};(37.5,4.5)*{};(37.5,-1.5)*{}**\dir{-};(40.5,1.5)*{W^\dagger};(43.5,3)*{};(46.5,3)*{}**\dir{-};(46.5,9)*{};(24.25,9)*{}**\dir{-};{\ar(0,9)*{};(24.25,9)*{}};(48.3,6)*{\psi};(46.5,1.5)*{};(46.5,10.5)*{}**\crv{(52.5,1.5)&(52.5,10.5)};(46.5,1.5)*{};(46.5,10.5)*{}**\dir{-};\endxy};\endxy
\\
&= \sum_{ij}
\xy(0,0)*+{\xy(-2.4,8)*{\psi};(0,2)*{};(0,14)*{}**\crv{(-8,2)&(-8,14)};(0,2)*{};(0,14)*{}**\dir{-};(0,4)*{};(4,4)*{}**\dir{-};(5.6,4)*{\scriptstyle i};(4,2)*{};(4,6)*{}**\crv{(8,2)&(8,6)};(4,2)*{};(4,6)*{}**\dir{-};(10.4,4)*{\scriptstyle j};(12,2)*{};(12,6)*{}**\crv{(8,2)&(8,6)};(12,2)*{};(12,6)*{}**\dir{-};(12,4)*{};(16,4)*{}**\dir{-};(16,2)*{};(20,2)*{}**\dir{-};(20,2)*{};(20,6)*{}**\dir{-};(20,6)*{};(16,6)*{}**\dir{-};(16,6)*{};(16,2)*{}**\dir{-};(18,4)*{{\scriptstyle s_i^{\dagger}}};(18.4,0)*{\scriptstyle 0};(20,-2)*{};(20,2)*{}**\crv{(16,-2)&(16,2)};(20,-2)*{};(20,2)*{}**\dir{-};(20,0)*{};(24,0)*{}**\dir{-};(20,4)*{};(24,4)*{}**\dir{-};(24,-2)*{};(32,-2)*{}**\dir{-};(32,-2)*{};(32,6)*{}**\dir{-};(32,6)*{};(24,6)*{}**\dir{-};(24,6)*{};(24,-2)*{}**\dir{-};(28,2)*{U};(32,0)*{};(36,0)*{}**\dir{-};(36,-2)*{};(40,-2)*{}**\dir{-};(40,-2)*{};(40,2)*{}**\dir{-};(40,2)*{};(36,2)*{}**\dir{-};(36,2)*{};(36,-2)*{}**\dir{-};(38,0)*{{\scriptstyle s_i}};(40,0)*{};(44,0)*{}**\dir{-};(45.6,0)*{\scriptstyle j};(44,-2)*{};(44,2)*{}**\crv{(48,-2)&(48,2)};(44,-2)*{};(44,2)*{}**\dir{-};(32,4)*{};(36,4)*{}**\dir{-};(37.6,4)*{\scriptstyle i};(36,2)*{};(36,6)*{}**\crv{(40,2)&(40,6)};(36,2)*{};(36,6)*{}**\dir{-};(44.4,4)*{\scriptstyle i};(46,2)*{};(46,6)*{}**\crv{(42,2)&(42,6)};(46,2)*{};(46,6)*{}**\dir{-};(46,4)*{};(50,4)*{}**\dir{-};(50,12)*{};(26,12)*{}**\dir{-};{\ar(0,12)*{};(26,12)*{}};(52.4,8)*{\psi};(50,2)*{};(50,14)*{}**\crv{(58,2)&(58,14)};(50,2)*{};(50,14)*{}**\dir{-};\endxy};\endxy
\\
&= \sum_i \bra{i} \rho \ket{i}
\xy(0,0)*+{\xy(0,4)*{};(0,8)*{}**\crv{(-3.2,4)&(-3.2,8)};(0,4)*{};(4,4)*{}**\dir{-};(4,2)*{};(8,2)*{}**\dir{-};(8,2)*{};(8,6)*{}**\dir{-};(8,6)*{};(4,6)*{}**\dir{-};(4,6)*{};(4,2)*{}**\dir{-};(6,4)*{{\scriptstyle s_i^{\dagger}}};(6.4,0)*{\scriptstyle 0};(8,-2)*{};(8,2)*{}**\crv{(4,-2)&(4,2)};(8,-2)*{};(8,2)*{}**\dir{-};(8,0)*{};(12,0)*{}**\dir{-};(8,4)*{};(12,4)*{}**\dir{-};(12,-2)*{};(20,-2)*{}**\dir{-};(20,-2)*{};(20,6)*{}**\dir{-};(20,6)*{};(12,6)*{}**\dir{-};(12,6)*{};(12,-2)*{}**\dir{-};(16,2)*{U};(20,0)*{};(24,0)*{}**\dir{-};(24,-2)*{};(28,-2)*{}**\dir{-};(28,-2)*{};(28,2)*{}**\dir{-};(28,2)*{};(24,2)*{}**\dir{-};(24,2)*{};(24,-2)*{}**\dir{-};(26,0)*{{\scriptstyle s_i}};(28,0)*{};(32,0)*{}**\dir{-};(0,8)*{};(15,8)*{}**\dir{-};{\ar(32,8)*{};(15,8)*{}};(32,0)*{};(32,8)*{}**\crv{(38.4,0)&(38.4,8)};(20,4)*{};(24,4)*{}**\dir{-};(25.6,4)*{\scriptstyle i};(24,2)*{};(24,6)*{}**\crv{(28,2)&(28,6)};(24,2)*{};(24,6)*{}**\dir{-};\endxy};\endxy
\\
&= \sum_i \bra{i} \rho \ket{i}
\xy(0,0)*+{\xy(0,4)*{};(0,8)*{}**\crv{(-3.2,4)&(-3.2,8)};(0,4)*{};(4,4)*{}**\dir{-};(4,2)*{};(8,2)*{}**\dir{-};(8,2)*{};(8,6)*{}**\dir{-};(8,6)*{};(4,6)*{}**\dir{-};(4,6)*{};(4,2)*{}**\dir{-};(6,4)*{{\scriptstyle \rho_i}};(6.4,0)*{\scriptstyle 0};(8,-2)*{};(8,2)*{}**\crv{(4,-2)&(4,2)};(8,-2)*{};(8,2)*{}**\dir{-};(8,0)*{};(12,0)*{}**\dir{-};(8,4)*{};(12,4)*{}**\dir{-};(12,-2)*{};(20,-2)*{}**\dir{-};(20,-2)*{};(20,6)*{}**\dir{-};(20,6)*{};(12,6)*{}**\dir{-};(12,6)*{};(12,-2)*{}**\dir{-};(16,2)*{U};(20,0)*{};(24,0)*{}**\dir{-};(0,8)*{};(11,8)*{}**\dir{-};{\ar(24,8)*{};(11,8)*{}};(24,0)*{};(24,8)*{}**\crv{(30.4,0)&(30.4,8)};(20,4)*{};(24,4)*{}**\dir{-};(25.6,4)*{\scriptstyle i};(24,2)*{};(24,6)*{}**\crv{(28,2)&(28,6)};(24,2)*{};(24,6)*{}**\dir{-};\endxy};\endxy
= \tr(UX^{\dagger}),\\
\end{split}
\end{equation}
with $\ket{\psi}$ being a purification of $\rho$, and
\begin{equation}
X = \sum_{i} \bra{i} \rho \ket{i}\, \rho_i \otimes \ket i \bra 0.
\end{equation}
In the above, we used the notation
\begin{equation}
\begin{split}
\xy(0,0)*+{\xy(0,0)*{};(0,8)*{}**\crv{(-6.4,0)&(-6.4,8)};(0,0)*{};(4,0)*{}**\dir{-};(4,-2)*{};(8,-2)*{}**\dir{-};(8,-2)*{};(8,2)*{}**\dir{-};(8,2)*{};(4,2)*{}**\dir{-};(4,2)*{};(4,-2)*{}**\dir{-};(6,0)*{A};(8,0)*{};(12,0)*{}**\dir{-};(0,8)*{};(5,8)*{}**\dir{-};{\ar(12,8)*{};(5,8)*{}};(12,0)*{};(12,8)*{}**\crv{(18.4,0)&(18.4,8)};\endxy};\endxy
&= \sum_i
\xy(0,0)*+{\xy(0,0)*{};(0,8)*{}**\crv{(-6.4,0)&(-6.4,8)};(0,8)*{};(4,8)*{}**\dir{-};(5.6,8)*{\scriptstyle i};(4,6)*{};(4,10)*{}**\crv{(8,6)&(8,10)};(4,6)*{};(4,10)*{}**\dir{-};(10.4,8)*{\scriptstyle i};(12,6)*{};(12,10)*{}**\crv{(8,6)&(8,10)};(12,6)*{};(12,10)*{}**\dir{-};(0,0)*{};(4,0)*{}**\dir{-};(4,-2)*{};(8,-2)*{}**\dir{-};(8,-2)*{};(8,2)*{}**\dir{-};(8,2)*{};(4,2)*{}**\dir{-};(4,2)*{};(4,-2)*{}**\dir{-};(6,0)*{A};(16,0)*{};(13,0)*{}**\dir{-};{\ar(8,0)*{};(13,0)*{}};(12,8)*{};(16,8)*{}**\dir{-};(16,0)*{};(16,8)*{}**\crv{(22.4,0)&(22.4,8)};\endxy};\endxy\\
&= \sum_i
\xy(0,0)*+{\xy(-1.6,0)*{\scriptstyle i};(0,-2)*{};(0,2)*{}**\crv{(-4,-2)&(-4,2)};(0,-2)*{};(0,2)*{}**\dir{-};(0,0)*{};(4,0)*{};**\dir{-};(4,-2)*{};(8,-2)*{}**\dir{-};(8,-2)*{};(8,2)*{}**\dir{-};(8,2)*{};(4,2)*{}**\dir{-};(4,2)*{};(4,-2)*{}**\dir{-};(6,0)*{A};(8,0)*{};(12,0)*{};**\dir{-};(13.6,0)*{\scriptstyle i};(12,-2)*{};(12,2)*{}**\crv{(16,-2)&(16,2)};(12,-2)*{};(12,2)*{}**\dir{-};\endxy};\endxy
= \tr A.
\end{split}
\end{equation}
We also used that fact that
\begin{equation}
\begin{split}
\xy(0,0)*+{\xy(-2,2)*{\psi};(0,-2)*{};(0,6)*{}**\crv{(-6,-2)&(-6,6)};(0,-2)*{};(0,6)*{}**\dir{-};(0,0)*{};(4,0)*{}**\dir{-};(5.6,0)*{\scriptstyle i};(4,-2)*{};(4,2)*{}**\crv{(8,-2)&(8,2)};(4,-2)*{};(4,2)*{}**\dir{-};(10.4,0)*{\scriptstyle i};(12,-2)*{};(12,2)*{}**\crv{(8,-2)&(8,2)};(12,-2)*{};(12,2)*{}**\dir{-};(12,0)*{};(16,0)*{}**\dir{-};(16,4)*{};(9,4)*{}**\dir{-};{\ar(0,4)*{};(9,4)*{}};(18,2)*{\psi};(16,-2)*{};(16,6)*{}**\crv{(22,-2)&(22,6)};(16,-2)*{};(16,6)*{}**\dir{-};\endxy};\endxy
&=
\xy(0,0)*+{\xy(-1.6,0)*{\scriptstyle i};(0,-2)*{};(0,2)*{}**\crv{(-4,-2)&(-4,2)};(0,-2)*{};(0,2)*{}**\dir{-};(0,4)*{};(0,8)*{}**\crv{(-3.2,4)&(-3.2,8)};(0,0)*{};(4,0)*{}**\dir{-};(0,4)*{};(4,4)*{}**\dir{-};(6,2)*{\psi};(4,-2)*{};(4,6)*{}**\crv{(10,-2)&(10,6)};(4,-2)*{};(4,6)*{}**\dir{-};(14,2)*{\psi};(16,-2)*{};(16,6)*{}**\crv{(10,-2)&(10,6)};(16,-2)*{};(16,6)*{}**\dir{-};(16,4)*{};(20,4)*{}**\dir{-};(0,8)*{};(9,8)*{}**\dir{-};{\ar(20,8)*{};(9,8)*{}};(20,4)*{};(20,8)*{}**\crv{(23.2,4)&(23.2,8)};(16,0)*{};(20,0)*{}**\dir{-};(21.6,0)*{\scriptstyle i};(20,-2)*{};(20,2)*{}**\crv{(24,-2)&(24,2)};(20,-2)*{};(20,2)*{}**\dir{-};\endxy};\endxy\\
&=
\xy(0,0)*+{\xy(-1.6,0)*{\scriptstyle i};(0,-2)*{};(0,2)*{}**\crv{(-4,-2)&(-4,2)};(0,-2)*{};(0,2)*{}**\dir{-};(0,0)*{};(4,0)*{}**\dir{-};(4,-2)*{};(8,-2)*{}**\dir{-};(8,-2)*{};(8,2)*{}**\dir{-};(8,2)*{};(4,2)*{}**\dir{-};(4,2)*{};(4,-2)*{}**\dir{-};(6,0)*{\rho};(8,0)*{};(12,0)*{}**\dir{-};(13.6,0)*{\scriptstyle i};(12,-2)*{};(12,2)*{}**\crv{(16,-2)&(16,2)};(12,-2)*{};(12,2)*{}**\dir{-};\endxy};\endxy
= \bra i \rho \ket i.
\end{split}
\end{equation}
Therefore,
\begin{equation}
X^\dagger X = \sum_{i}  \rho_i^2 \bra{i} \rho \ket{i}^2 \otimes \proj 0
\end{equation}
and
\begin{equation}
F(\cchan N, \cchan N \cchan M) = \min_\rho \tr\sqrt{X^\dagger X}. 
\end{equation}
Hence, the quantity
\begin{equation}
\Delta := 1 - \min_p \tr \sqrt{\sum\nolimits_i p_i^2 \rho_i^2}
\end{equation}
satisfies
\begin{equation}
\frac{1}{4} \Delta \le \min_A \max_i \left[{ 1 - \sqrt{\tr( \rho_i A_i )}}\right] \le \Delta,
\end{equation}
i.e.,
\begin{equation}
\frac{1}{2}\Delta - \frac{1}{16} \Delta^2 \le \min_A \max_i \left[{ 1 - \tr( \rho_i A_i )}\right] \le 2\Delta - \Delta^2.
\end{equation}
This provides a simple estimate to the optimal achievable solution to the minimax state discrimination problem.

We note that the same upper bound, and a better lower bound exactly equal to $\Delta$, can also be derived \cite{tyson09x1} by applying the minimax theorem to previously obtained state discrimination bounds~\cite{tyson09,ogawa99,tyson09err}.





\section{Comparison with other results}
\label{comparison}

Results similar to ours exist in the special case $\chan M=\id$~\cite{barnum00, tyson09x1}, or in the particular case of state discrimination~\cite{tyson09}. In these works, bounds are derived for the entanglement fidelity for a fixed state, but a direct application of the minimax theorem yields bounds for the worst-case entanglement fidelity~\cite{tyson09x1}. However, it is not known whether there exists an efficient procedure for constructing near-optimal recovery channels compatible with the worst-case bounds obtained in this way.

Let us show that our method also works for the fixed-state entanglement fidelity, at least for the main theorem. Then we will see that it yields almost the same bounds as in Ref.~\cite{tyson09x1} in the case $\chan M=\id$, albeit weaker.

It is easy to show that both Theorems \ref{main} and \ref{main2} still hold if the worst-case fidelity $F$ is replaced by the fidelity $F_\rho$ for a fixed input state $\rho$~\cite{beny10}. The proofs are much simpler as the minimum disappears from Eq.~\eqref{minimax1}. Hence, it suffices to see that $\max_{\|A\| \le 1, \|A'\|\le 1} g(A,A')$ is equal to both sides of the equation
\begin{equation}
\max_{|\chan R| \le d} F_\rho(\chan R\chan N,\chan M) = \max_{|\chan R'| \le d}F_\rho(\cchan N,\chan R' \cchan M)
\end{equation}
when $\cchan N$ and $\cchan M$ are complementary, respectively, to $\chan N$ and $\chan M$. From this it follows also that if $\gcchan N \sim \cchan N$ and $\gcchan M \sim \cchan M$, then
\begin{equation}
\max_{\chan R} F_\rho(\chan R\chan N,\chan M) = \max_{\chan R'}F_\rho(\gcchan N,\chan R' \gcchan M).
\end{equation}
However, since $F_\rho$ does not have the special property expressed in Eq.~\eqref{rmonot}, we cannot use the same technique to get a simple approximation of $\max_{\chan R'}F_\rho(\gcchan N,\chan R' \gcchan M)$ in the case $\gcchan M^2 = \gcchan M$. However, we can obtain an inequality similar to Eq.~\eqref{bounds} for the important case
\begin{equation}
\gcchan M(\sigma) = \rho \tr(\sigma),
\end{equation}
where $\rho$ is the same state as the one used to evaluate the fidelity. This channel $\gcchan M$ is generalized complementary to $\chan M = \id$. Hence, this corresponds to the approximate quantum error correction problem.

Concretely, suppose that $\chan R'_0$ is such that
\begin{equation}
d_\rho(\gcchan N,\chan R'_0(\rho) \tr) = \min_{\chan R'}d_\rho(\gcchan N,\chan R'(\rho) \tr)
\equiv \varepsilon_0.
\end{equation}
Then, using the triangle inequality, we have
\begin{equation}
d_\rho(\gcchan N,\gcchan N(\rho) \tr) \le \varepsilon_0 + d_\rho(\chan R'_0(\rho) \tr,\gcchan N(\rho) \tr).
\end{equation}
The second term is calculated from
\begin{equation}
\begin{split}
F_\rho(\chan R'_0(\rho) \tr,\gcchan N(\rho) \tr) &= f(\chan R'_0(\rho),\gcchan N(\rho))\\
&\le F_\rho (\gcchan N, \chan R'_0(\rho) \tr) = 1-\varepsilon_0^2.
\end{split}
\end{equation}
Hence, $d_\rho(\gcchan N,\gcchan N(\rho) \tr) \le 2 \varepsilon_0$, from which we obtain the estimate
\begin{equation}
\frac 1 2 d_\rho(\gcchan N,\gcchan N(\rho) \tr) \le \min_{\chan R} d_\rho(\chan R\chan N,\id) \le d_\rho(\gcchan N,\gcchan N(\rho) \tr).
\end{equation}

This is a weaker form of the bounds derived by Tyson (Eq.~(153) of Ref.~\cite{tyson09x1}) using a different method. The upper bound can be seen to be equivalent to the corresponding bound in Ref.~\cite{tyson09x1}, but the lower bound is weaker, which may be significant in the regime where the optimal error is large.

Indeed, if we write the estimate explicitly in terms of some Kraus operators $E_i$ of $\chan N$, we have
\begin{equation}
F_\rho(\gcchan N,\gcchan N(\rho) \tr) = \tr \sqrt{ \sum_{ij} E_i \rho^2 E_j^\dagger \tr(\rho E_i^\dagger E_j) }.
\end{equation}
This is precisely the quantity that Tyson denotes by $\Lambda$ (Eq.~(154) of Ref.~\cite{tyson09x1}) (note that the quantity that Tyson calls `fidelity' is the square of our fidelity). Tyson writes the channel as $\chan N(\rho) = \sum_i p_i F_i \rho F_i^\dagger$, where $\tr(\rho F_i^\dagger F_j) = \delta_{ij}$ and $p_i>0$, which is always possible. Hence, we obtain his formula for $\Lambda$ using $E_i = \sqrt p_i F_i$.

The corresponding near-optimal channel that we obtain in this way is the same as the one introduced by Tyson, which is $\chan R_g$ defined in Eq.~\eqref{goodchannel} but with $\rho_0 = \sigma = \rho$, i.e.,
\begin{equation}
\chan R_g(\tau) = \Phi^\dagger[\Phi(\one)^{- \frac 1 2}\,\tau\,\Phi(\one)^{- \frac 1 2} ] + \chan T (\tau),
\end{equation}
where
\begin{equation}
\Phi^\dagger(\tau) := \sum_{ij} \tr( \rho E_i^\dagger E_j ) \rho E_j^\dagger \tau E_i \rho,
\end{equation}
and $\chan T$ is any CP map that makes $\chan R_g$ trace-preserving.
For instance, $\chan T$ can be chosen as
\begin{equation}
\chan T(\tau) = \tr( \tau P ) \sigma,
\end{equation}
where $P$ projects on the kernel of $\Phi(\one)$, and $\sigma$ can be any state.

As explained in Ref.~\cite{tyson09x1}, this channel is not the same as the one used in Ref.~\cite{barnum00}, which yields similar bounds and was introduced by Petz~\cite{petz88,hiai10} who showed that it yields exact inversion on two given states. The latter is built in the same way, but from the CP map
\begin{equation}
\Phi^\dagger(\tau) = \sqrt \rho \chanh N(\tau) \sqrt \rho.
\end{equation}
The performance of this channel with $\rho$ maximally mixed (known as the ``transpose channel'') was also studied for approximate QEC in terms of the worst-case fidelity in Ref.~\cite{ng09}.

\section{Acknowledgements} CB would like to thank Milan Mosonyi and Jon Tyson for related discussions. OO was supported by the Spanish MICINN (Consolider-Ingenio QOIT), the Foundational Questions Institute (FQXi), and the Interuniversity Attraction Poles program of the Belgian Science Policy Office, under grant IAP P6-10 $\ll$photonics@be$\gg$. This work was supported in part by the cluster of excellence EXC 201 ``Quantum Engineering and Space-Time Research''. CB also acknowledges the support by the EU projects CORNER and COQUIT. The Centre for Quantum Technologies is funded by the Singapore Ministry of Education and the National Research Foundation as part of the Research Centres of Excellence program.

\bibliography{approx_qec3}

\newpage

\appendix


\section{Proof of the relation $\re\tr(V^\dagger A |X_\rho|)\leq \tr|X_\rho|$}
\label{firstapp}

Let $\{\ket{i}\}$ be an eigenbasis of $|X_\rho|$, $|X_\rho|=\sum_jx_j|j\rangle\langle j|$, $x_j\geq 0$. Since $V$ is unitary and $\|A\|\leq 1$, we have $\|V^\dagger A\|\leq \|V^\dagger\|\|A\|\leq 1$. This means that for all $j$,
\begin{gather}
\| V^\dagger A \ket{j} \|^2 = \bra{j} A^\dagger V V^\dagger A \ket{j} = \sum_i \bra{j} A^\dagger V \proj i V^\dagger A \ket{j}\le 1.
\end{gather}
Therefore,
\begin{equation}
|\bra j V^\dagger A \ket{j}|\leq 1, \quad \forall j.
\end{equation}
We thus have
\begin{gather}
\re\tr(V^\dagger A |X_\rho|)=\re\sum_j\bra j V^\dagger A \ket{j}x_j\leq \sum_j|\bra j V^\dagger A \ket{j}|x_j\notag\\
\leq \sum_jx_j\equiv \tr|X_\rho|.\notag
\end{gather}

\section{The saddle point in the case of a unique, full-rank $\rho_0$}
\label{obvious}

Our procedure for constructing a near-optimal recovery channel requires finding a saddle point $(\rho_0,A_0)$ of $\re g_{\rho}(A,U')$, where $U'$ yields $\cchan N$ through $\cchan N(\rho) = \tr_2(U'(\rho \otimes \proj{0}) (U')^\dagger)$. In the case of $\chan M(\rho)=\rho\otimes\sigma$, we saw that this is equivalent to finding $(\rho_0,A_0)$ that achieves the optimization $\min_\rho \max_{\|A\|\le 1} \re \tr(A X_\rho^\dagger) =  \min_\rho \tr |X_\rho|$. One way to approach the problem in this case could be to first search for $\rho_0$ that achieves $\min_{\rho}\tr|X_{\rho}|$, which is a convex optimization task. If we find $\rho_0$ that is unique, then we know that it must be the one at the saddle point. Now imagine that this $\rho_0$ is also of full rank. We will prove that in such a case the staddle-point $A_0$ can be taken to be $A_0=U_0$, where $U_0$ is any unitary that comes from the polar decomposition of $X_{\rho_0}$, $X_{\rho_0}=U_0|X_{\rho_0}|$.

Clearly, the unitary $U_0$ achieves the maximum in $\max_{\|A\|\le 1} \re \tr(A X_{\rho_0}^\dagger)$, because $\re\tr(U_0X_{\rho_0}^\dagger)=\re\tr(U_0|X_{\rho_0}|U_0^{\dagger})=\tr|X_{\rho_0}|$, but we also need that $\rho_0$ achieves the minimum in $\min_\rho\re \tr(U_0 X_\rho^\dagger)$. If $U_0$ is the unique maximizer of $\max_{\|A\|\le 1} \re \tr(A X_{\rho_0}^\dagger)$, then we know that it must be a saddle point. The problem is that in general $A_0$ need not be unique. However, we will see that if $\rho_0$ is of full rank, $A_0$ is unique up to a freedom that is irrelevant for the value of $\re\tr(AX_{\rho}^\dagger)$ whose saddle point we are looking for. Hence, any operator $A_0$ that maximizes $\re \tr(A X_{\rho_0}^\dagger)$ would yield a saddle point.

To show this, let us characterize the operators $A$, $\|A\| \le 1$, that satisfy
\begin{gather}
\re \tr(A X_{\rho_0}^\dagger) = \tr|X_{\rho_0}|. \label{AXX} 
\end{gather}
Eq.~\eqref{AXX} can be equivalently written as
\begin{gather}
\re \tr (\check{A} |X_{\rho_0}|) = \tr|X_{\rho_0}|,\label{AXX2}
\end{gather}
where (using the cyclic invariance of the trace)
\begin{equation}
\check{A} :=  U_0^{\dagger}A.
\end{equation}
Note that $\|\check{A}\| \le \|U_0^{\dagger}\|\|A\|\le 1$.

Let $\ket{i}$, $i=1,\dots,d$, be eigenvectors of $|X_{\rho_0}|$ ordered such that the first $n\leq d$ of them are all those with nonzero eigenvalues, $|X_{\rho_0}| = \sum_{i\le n} x_i \proj{i}$, $x_i > 0$. Then Eq.~\eqref{AXX2} is equivalent to
\begin{equation}
\re \sum_{i \le n} \bra{i} \check{A} \ket{i} x_i = \sum_{i\le n} x_i,
\end{equation}
or
\begin{equation}
\sum_{i \le n} (1 - \re \bra{i} \check{A} \ket{i}) x_i = 0.
\end{equation}
Since $\|\check{A}\| \le 1$, we have $|\bra{i} \check{A} \ket{i}| \le 1$ for all $i \le n$. Therefore, the above is satisfied if and only if
\begin{equation}
\bra{i} \check{A} \ket{i} = 1, \quad \forall i \le n.
\end{equation}
From $\|\check{A}\| \le 1$ we also have that for all $j \le n$,
\begin{equation}
\begin{split}
\| \check{A} \ket{j} \|^2 &= \bra{j} \check{A}^\dagger \check{A} \ket{j} = \sum_i \bra{j} \check{A}^\dagger \proj i \check{A} \ket{j}\\
&= 1 + \sum_{i \neq j} \bra{j} \check{A}^\dagger \proj i \check{A} \ket{j} \le 1,\\
\end{split}
\end{equation}
which is only possible if
\begin{equation}
\bra i \check{A} \ket j = 0, \quad \forall \; i\neq j \, ,\; j\le n.
\end{equation}
Since $\|\check{A}^\dagger \| = \|\check{A}\| \le 1$, we obtain via the same argument
\begin{equation}
\bra i \check{A} \ket j = 0, \quad \forall \; i\neq j \, ,\; i\le n.
\end{equation}
This implies that $\check{A}$ is block diagonal in the basis $\{\ket{i}\}$, with the upper block (corresponding to the first $n$ basis vectors) equal to $\one$. This is necessary and sufficient for our condition to hold. Let us write
\begin{equation}
\check{A} = \begin{pmatrix} \one & 0 \\ 0 & {B} \end{pmatrix}
\end{equation}
for some matrix ${B}$ and $\one = \sum_{i \le n} \proj{i}$.

Coming back to $A$ itself, and labeling the blocks of $U_0$ by $U_{\mu \nu}$, we have that
\begin{equation}
A = U_0\check{A} = \begin{pmatrix} U_{11} & U_{12} \\ U_{21} & U_{22} \end{pmatrix} \begin{pmatrix} \one & 0 \\ 0 & {B} \end{pmatrix}
= \begin{pmatrix} U_{11} & U_{12}{B} \\  U_{21} & U_{22}{B} \end{pmatrix}.
\end{equation}
Since the left block column of $A$ is unique, we know that it is equal to that of $A_0$ which sits at the saddle point.

Now, suppose that we replace $X_{\rho_0}$ by $X_{\rho}$ which is such that the support of $|X_{\rho}|$ is within the support of $|X_{\rho_0}|$. Then let us show that only the left block column of $A$ as defined above would matter for the calculation of the pseudo-fidelity
\begin{equation}
\re \tr(AX_{\rho}^\dagger) = \re \tr(V^\dagger A|X_\rho|),
\end{equation}
where $X_{\rho}=V|X_\rho|$. Indeed, since the only nonzero components of $|X_\rho|$ are in the upper left block, for the trace in the last expression only the upper left block of $V^\dagger A$ would matter, and this block is
\begin{equation}
(V^\dagger A)_{11} = V_{11}^\dagger A_{11}  + V_{21}^\dagger A_{21}  = V_{11}^\dagger U_{11}  + V_{21}^\dagger U_{21},
\end{equation}
where $V_{\mu\nu}$ are the corresponding blocks of $V$. We see that $(V^\dagger A)_{11}$ is independent of any freedom we may have in choosing $A$ and therefore behaves just like for the saddle-point $A_0$.
\qed

For $\chan M(\rho)=\rho\otimes\sigma$, we can write $|X_{\rho_0}|^2$ in the form
\begin{gather}
|X_{\rho_0}|^2 = \sum_i E_i \rho_0^2 E_i^{\dagger} \lambda_i,
\end{gather}
where $E_i$ are suitable Kraus operators of $\chan N$, and $\lambda_i=\tr(E_i\sigma E_i^{\dagger})$ [$\cchan M(\rho)=\sigma \tr(\rho)$]. It is easy to see that a
state $|\psi\rangle$ is in the kernel of $|X_{\rho_0}|^2$ (and therefore of $|X_{\rho_0}|$) if and only if, for all $i$,
$\lambda_i = 0$  or $\rho_0 E_i^{\dagger} |\psi\rangle = 0$. If $\rho_0$ is full-rank, the last condition reads $E_i^{\dagger} |\psi\rangle = 0$. This means that for any $\rho$, the kernel of $|X_\rho|$ contains the kernel of $|X_{\rho_0}|$, or equivalently, the support of $|X_{\rho_0}|$ contains the support of $|X_\rho|$ for all $\rho$. By the above argument, $A_0=U_0$ would be the unique maximizer of $\re\tr(AX^\dagger_{\rho_0})$ up to the freedom in the way $\check{A}=U_0^\dagger A$ acts on the kernel of $|X_\rho|$, which has no relevance for the value of $\re\tr(\check{A}|X_\rho|)=\re\tr(AX^\dagger_{\rho_0})$. Hence, $(U_0,\rho_0)$ is a saddle point of $\re\tr(AX^\dagger_{\rho})$.


\section{Inadequacy of the above procedure when $\rho_0$ is not of full rank}
\label{counter}


Unfortunately, the above argument cannot be used to simplify the procedure in the general case, since in principle $\rho_0$ need not be unique (take, for example, the extreme case where $\chan N$ is correctable), and even if it is unique, it need not be of full rank. Let us illustrate the latter case by an example.

Let $\cchan N$ be a channel with a 2-dimensional input (we will denote the input system by $A$) and a 2-dimensional output (denoted by $E$) with basis vectors $\{|0\rangle^A,|1\rangle^A\}$ and $\{|0\rangle^E,|1\rangle^E\}$, respectively, that acts as follows: $\cchan N (\rho^A)=(1-s)\rho^E+ s|0\rangle\langle 0|^E\tr(\rho^A)$. Take $\cchan M (\rho^A)=|0\rangle\langle 0|^E\tr(\rho^A)$. Note that $\cchan N=(1-s)\mathbf{1}+s\cchan M$ and $\cchan N \cchan M = \cchan M$. Let $|\psi_{\rho}\rangle^{AR}$ be a purification of $\rho^A$. From the concavity of the \textit{square} of the fidelity \cite{uhlmann76}, we have
\begin{gather}
\min_\rho F^2_{\rho}(\cchan N,\cchan N\cchan M)\notag\\=\min_{\rho}F^2((1-s)|\psi\rangle\langle\psi|^{ER}+s\cchan M\otimes\id^R(|\psi\rangle\langle\psi|^{AR}),\notag\\\cchan M\otimes\id^R(|\psi\rangle\langle\psi|^{AR}))\notag\\
\geq \min_{\rho}[(1-s)F^2(|\psi\rangle\langle\psi|^{ER},M\otimes\id^R(|\psi\rangle\langle\psi|^{AR}))+\notag\\
s F^2(M\otimes\id^R|\psi\rangle\langle\psi|^{AR},M\otimes\id^R(|\psi\rangle\langle\psi|^{AR}))]\geq s.
\end{gather}
We will show that the lower bound $s$ is actually achievable for $\rho^A_0=|1\rangle\langle 1|^A$. Indeed, in this case we can take $|\psi\rangle^{AR}=|1\rangle^A|1\rangle^R$. We then have $\chan M\otimes\id^R(|\psi\rangle\langle\psi|^{AR})=|0\rangle\langle 0|^E\otimes |1\rangle\langle 1|^R$. We obtain
\begin{gather}
F^2_{|1\rangle\langle 1|}(\cchan N,\cchan N\cchan M)\notag\\
=\langle 0|^E\langle 1|^R((1-s)|1\rangle\langle 1|^E\otimes |1\rangle\langle 1|^R+\notag\\
s|0\rangle\langle 0|^E\otimes |1\rangle\langle 1|^R)|0\rangle^E|1\rangle^R=s.
\end{gather}
Moreover, it is easy to see that $\rho^A_0=|1\rangle\langle 1|^A$ is the unique state that achieves the minimum value.

The state $\rho^A_0=|1\rangle\langle 1|^A$ does not have full support. To show that this does not allow us to obtain a saddle-point $A_0$ by simply taking any maximizer of $\max_Ag_{\rho_0}(A,U')$, let us look at the support of $|X_{\rho_0}|$ as a function of $\rho$. Since $\cchan N$ has three Kraus operators, $\hat{E}_0=\sqrt{1-s}\mathbf{1}$, $\hat{E}_1=\sqrt{s}|0\rangle\langle 0|$, $\hat{E}_2=\sqrt{s}|0\rangle\langle 1|$, we will take system $B$ in the circuit diagram to be 3-dimensional, with basis $\{|0^B\rangle, |1^B\rangle, |2^B\rangle\}$.
The dilation of $\cchan N$ (or $\chan N$) is
\begin{gather}
|0\rangle^A \rightarrow \sqrt{1-s}|0\rangle^E|0\rangle^B+\sqrt{s}|0\rangle^E|1\rangle^B,\\
|1\rangle^A \rightarrow \sqrt{1-s}|1\rangle^E|0\rangle^B+\sqrt{s}|0\rangle^E|2\rangle^B.\label{isomexample}
\end{gather}
From this, we obtain the Kraus operators of $\chan N$,
\begin{gather}
{E}_0=\sqrt{1-s}|0\rangle\langle 0|+\sqrt{s}|1\rangle\langle 0|+\sqrt{s}|2\rangle\langle 1|,\\
{E}_1=\sqrt{1-s}|0\rangle\langle 1|.
\end{gather}
Using the expression for ${X^{\dagger}_{\rho}X_{\rho}}$ in terms of an arbitrary choice of Kraus operators,
\begin{gather}
{X^{\dagger}_{\rho}X_{\rho}}=\sum_{ij}E_i\rho^2E_j^{\dagger}\tr(E_j\sigma E_i^{\dagger}),
\end{gather}
we obtain (in our case $\sigma=|0\rangle\langle 0|$)
\begin{gather}
{X^{\dagger}_{\rho}X_{\rho}}=E_0\rho^2E_0^{\dagger}.
\end{gather}
For $\rho=\rho_0=|1\rangle\langle 1|$, we have
\begin{gather}
|X_{\rho_0}|^2=s|2\rangle\langle 2|.
\end{gather}
However, if we take, for example, $\rho=|0\rangle\langle 0|$, we obtain
\begin{gather}
{X^{\dagger}_{|0\rangle\langle 0|}X_{|0\rangle\langle 0|}}=|\phi\rangle\langle \phi|,
\end{gather}
where
\begin{gather}
|\phi\rangle=\sqrt{1-s}|0\rangle+\sqrt{s}|1\rangle.
\end{gather}
We therefore see that the support of $|X_{\rho_0}|$ in this case does not contain the support of $|X_\rho|$ for all $\rho$.

Let us show that due to this fact, not every unitary one obtains from the polar decomposition of $X^{\dagger}_{\rho_0}$ is a saddle point. Denote the left and right ancilla systems (in state $|0\rangle$) displayed in the main circuit diagram by $C$ (of dimension 2) and $C'$ (of dimension 3), respectively, and denote the system corresponding to the middle wire by $D$ (the latter has dimension 3). Now, consider the case $\rho^A=|1\rangle\langle 1|^A$ (the state $|\psi_{\rho}\rangle^{AR}$ can be taken to be $|1\rangle^A|1\rangle^R$). The action of the isometry $V_{\chan N}$ is given by \eqref{isomexample}, and $U'$ in this case realizes this isometry on the input $E'$ when the ancilla $C'$ is in state $|0\rangle^{C'}$. The input (from the right) at $E'$ is in the state $|0\rangle^{E'}$ since by definition this is the output of $\cchan M$. Using that, one easily obtains that the overlap reduces to
\begin{gather}
g_{|1\rangle\langle 1|}(U,U')=\sqrt{s}\langle 1|^{B'}\langle\phi|^{D}U|2\rangle^B|0\rangle^C.
\end{gather}
The real part of $g_{|1\rangle\langle 1|}(U,U')$ is maximized when
\begin{gather}
U|2\rangle^B|0\rangle^C=|1\rangle^{B'}|\phi\rangle^{D}.\label{conduni}
\end{gather}
However, we have a freedom of choosing how $U$ acts on $\textrm{Span}\{ |0\rangle^B|0\rangle^C, |1\rangle^B|0\rangle^C \}$. To see that not every unitary satisfying \eqref{conduni} yields a saddle point, consider the case of $\rho^A=|0\rangle\langle 0|^A$. In this case, the overlap reduces to
\begin{gather}
g_{|0\rangle\langle 0|}(U,U')=\langle 0|^{B'}\langle\phi|^{D}U|\phi\rangle^B|0\rangle^C.
\end{gather}
Since the action of $U$ on $|\phi\rangle^B|0\rangle^C$ is completely undetermined by condition \eqref{conduni}, we could choose $U_0$ such that it satisfies both $\eqref{conduni}$ and, e.g.,
\begin{gather}
U_0|\phi\rangle^B|0\rangle^C=-|0\rangle^{B'}|\phi\rangle^{D}.
\end{gather}
Then we obtain that $\re g_{|0\rangle\langle 0|}(U_0,U')=-1<\re g_{|1\rangle\langle 1|}(U_0,U')=\sqrt{s}$, i.e., $(U_0,|1\rangle\langle 1|)$ is not a saddle point of $\re g_{\rho}(A,U')$ since $\rho_0=|1\rangle\langle 1|$ does not minimize $\re g_{\rho}(U_0,U')$.

\end{document}